\newcommand{\btheta}{\boldsymbol{\theta}}
\newcommand{\psibst}{\psi_{\boldsymbol{\theta}}}
\DeclareMathOperator*{\argmin}{arg\,min}
\theoremstyle{plain}
\newtheorem{theorem}{Theorem}[section]
\newtheorem{corollary}[theorem]{Corollary}
\theoremstyle{definition}
\theoremstyle{remark}
\title{QuACK: Accelerating Gradient-Based Quantum Optimization with Koopman Operator Learning
}
\author{Di Luo \thanks{Co-first authors} \\
Center for Theoretical Physics, \\
Massachusetts Institute of Technology,\\ 
Cambridge, MA 02139, USA \\
Department of Physics, Harvard University, \\Cambridge, MA 02138, USA \\
The NSF AI Institute for Artificial \\ Intelligence and Fundamental Interactions \\
\texttt{diluo@mit.edu} \\
\And
Jiayu Shen $^{*}$\\
Department of Physics, \\
University of Illinois, Urbana-Champaign \\
Urbana, IL 61801, USA \\
Illinois Quantum Information Science \\ and Technology Center \\
Illinois Center for Advanced Studies\\ of the Universe \\
\texttt{jiayus3@illinois.edu}
\And
Rumen Dangovski \\
Department of Electrical Engineering\\ and Computer Science, \\
Massachusetts Institute of Technology\\ 
Cambridge, MA 02139, USA \\
\texttt{rumenrd@mit.edu}
\newline
\And
Marin Solja\v{c}i\'{c}  \\
Department of Physics, \\
Massachusetts Institute of Technology\\ 
Cambridge, MA 02139, USA \\
\texttt{soljacic@mit.edu}
}
\begin{document}

\maketitle

\begin{abstract}
Quantum optimization, a key application of quantum computing, has traditionally been stymied by the linearly increasing complexity of gradient calculations with an increasing number of parameters. This work bridges the gap between Koopman operator theory,
which has found utility in applications because it allows for a linear representation of nonlinear dynamical systems,
and natural gradient methods in quantum optimization, leading to a significant acceleration of gradient-based quantum optimization. We present Quantum-circuit Alternating Controlled Koopman learning (QuACK), a novel framework that leverages an alternating algorithm for efficient prediction of gradient dynamics on quantum computers. We demonstrate QuACK's remarkable ability to accelerate gradient-based optimization across a range of applications in quantum optimization and machine learning. In fact, our empirical studies, spanning quantum chemistry, quantum condensed matter, quantum machine learning, and noisy environments, have shown accelerations of more than 200x speedup in the overparameterized regime, 10x speedup in the smooth regime, and 3x speedup in the non-smooth regime. With QuACK, we offer a robust advancement that harnesses the advantage of gradient-based quantum optimization for practical benefits.

\end{abstract}

\section{Introduction}

The dawn of quantum computing has ushered in a new era of technological advancement, presenting a paradigm shift in how we approach complex computational problems. Central to these endeavors are Variational Quantum Algorithms (VQAs)~\citep{cerezo2021variational}, which are indispensable for configuring and optimizing quantum computers. These algorithms serve as the backbone of significant domains, including quantum optimization~\citep{Moll_2018} and quantum machine learning (QML) ~\citep{biamonte2017quantum}. VQAs have also influenced advances in various other quantum learning applications~\citep{dong2008quantum,qml_supervised,kerenidis2019q,Liu_2021_qsupervised,Huang_2022_theory,Liu_2022,Huang_2022_experiment,zapata}.

At the core of VQAs lies the challenge of effectively traversing and optimizing the high-dimensional parameter landscapes that define quantum systems. To solve this, there are two primary strategies: gradient-free~\citep{spall1998overview} and \emph{gradient-based} methods~\citep{leng2022differentiable}. Gradient-free methods, while straightforward and widely used in practice, do not provide any guarantees of convergence, which often results in suboptimal solutions. On the other hand, gradient-based methods, such as those employed by the Variational Quantum Eigensolver (VQE)~\citep{peruzzo2014variational,tilly2022variational}, \emph{offer guarantees} for convergence. This characteristic has facilitated their application across a multitude of fields, such as high-energy physics~\citep{PhysRevA.98.032331,PRXQuantum.3.010324}, condensed matter physics~\citep{vqe_cmt}, quantum chemistry~\citep{vqe_chem}, and important optimization problems such as max-cut problem~\citep{farhi2014quantum,Harrigan2021}.

More specifically, recent developments in overparameterization theory show that gradient-based methods like gradient descent provide convergence guarantees in a variety of quantum optimization tasks~\citep{larocca2021theory,liu2023analytic,you2022convergence}. New research indicates these methods surpass gradient-free techniques such as SPSA~\citep{spall1998overview} in the context of differentiable analog quantum computers~\citep{leng2022differentiable}. Notably, the quantum natural gradient, linked theoretically with imaginary time evolution, captures crucial geometric information, thereby enhancing quantum optimization~\citep{Stokes2020quantumnatural}.

Despite these advantages, the adoption of gradient-based methods in quantum systems is not without its obstacles. The computation of gradients in these hybrid quantum-classical systems is notoriously \emph{resource-intensive} and scales linearly with the number of parameters. This computational burden presents a significant hurdle, limiting the practical application of these methods on quantum computers, and thus, the full potential of quantum optimization and machine learning. Hence, we pose the question, ``\emph{Can we accelerate gradient-based quantum optimization?}'' This is crucial to harness the theoretical advantages of convergence for practical quantum optimization tasks. In this work, we answer our question affirmatively by providing order-of-magnitude speedups to VQAs. Namely, our contributions are as follows:
\begin{itemize}
\item By perceiving the optimization trajectory in quantum optimization as a dynamical system, similarly to~\citep{dietrich2020koopman,redman2022algorithmic} we bridge quantum natural gradient theory, overparameterization theory, and Koopman operator learning theory~\citep{koopman_1931},
which allows for a linear represenation of nonlinear dynamical systems and thus is useful in applications.
\item We propose Quantum-circuit Alternating Controlled Koopman Operator Learning (QuACK), a new algorithm grounded on this theory. We scrutinize its spectral stability, convex problem convergence, complexity of speedup, and non-linear extensions using sliding window and neural network methodologies.
\item Through extensive experimentation in fields like quantum many-body physics, quantum chemistry, and quantum machine learning, we underscore QuACK's superior performance, achieving speedups over 200x, 10x, 3x, and 2x--5.5x in overparameterized, smooth, non-smooth and noisy regimes respectively.
\end{itemize}

\begin{figure*}[t!]
\begin{center}
\includegraphics[width=1.0\linewidth]{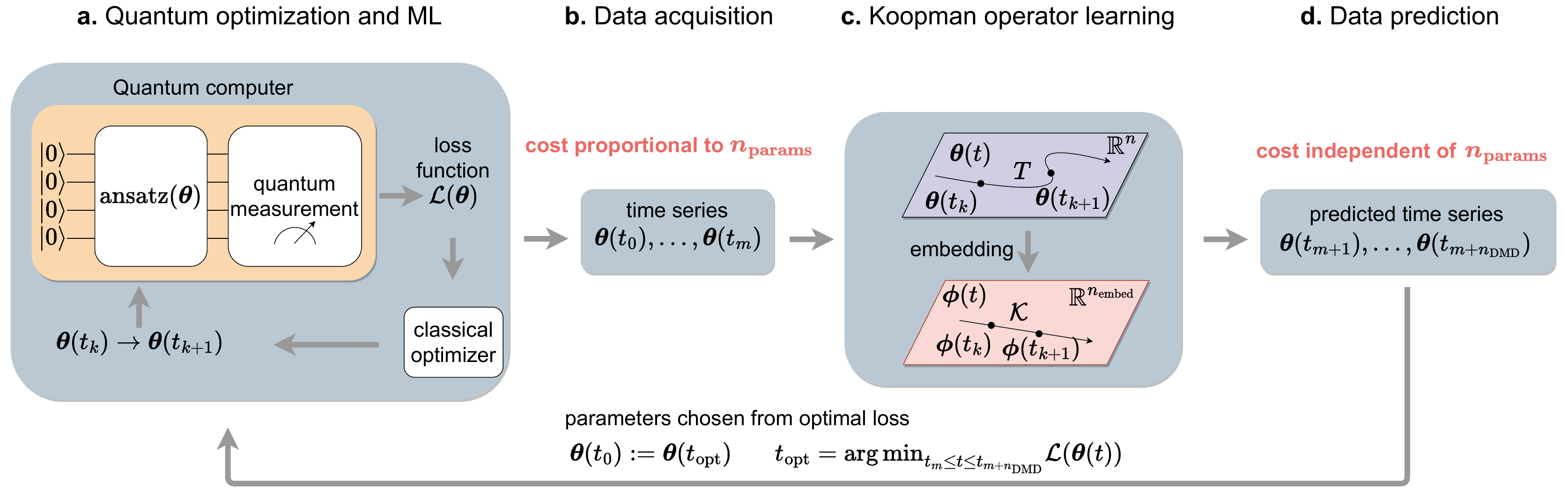}
\caption{QuACK: Quantum-circuit Alternating Controlled Koopman Operator Learning. (a) Parameterized quantum circuits process information; loss function is evaluated via quantum measurements. Parameter updates for the quantum circuit are computed by a classical optimizer. (b) Optimization history forms a time series, the computational cost of which is proportional to the number of parameters. (c) Koopman operator learning finds an embedding of data with approximately linear dynamics from time series in (b). (d) Koopman operator predicts parameter updates with computational cost independent of the number of parameters. Loss from predicted parameters is evaluated, and optimal parameters are used as starting point for the next iteration.}
\label{fig:scheme_1}
\end{center}
\end{figure*}

\section{Related Work}

\textbf{Quantum Optimization Methods.} Owing to prevailing experimental constraints, quantum optimization has frequently employed gradient-free methods such as SPSA, COBYLA, and Bayesian optimization~\citep{spall1998overview,self2021variational,tamiya2022stochastic}, among others that alleviate the challenges inherent in quantum optimization~\citep{zhou2020quantum,heidari2022toward,yao2022monte}.
Regarding gradient-based methods, the quantum natural gradient~\citep{Stokes2020quantumnatural} exhibits compelling geometric properties, and conventional gradient methods such as SGD and Adam are applicable as well.
Recent developments in overparameterization theory~\citep{larocca2021theory,liu2023analytic,you2022convergence} have provided assurances for SGD's convergence in quantum optimization. 
We demonstrate an example where the gradient-based methods find the minimum while the gradient-free method gets trapped in Appendix~\ref{app:comparing_gradient-based_gradient-free}.
While meta-learning techniques~\citep{verdon2019learning,khairy2020learning,wilson2021optimizing} have been explored to accelerate optimization across various tasks, our focus in this work is on the acceleration of gradient-based quantum optimization, given the appealing theoretical properties of such methods. We emphasize, however, that our approach is complementary to meta-learning and could be integrated as a subroutine for potential enhancements.

Some additional lines of work are related to our study.
\citet{you2023analyzing} analyze the convergence of quantum neural networks through the lens of the neural tangent kernel. Similarly, working through an effective quantum neural tangent kernel theory, \citet{wang2022symmetric} propose symmetric pruning to improve the loss landscape and the convergence of quantum neural networks.
Finally, \citet{garcia2023effects} study the effect of noise on overparameterization in quantum neural networks.

\textbf{Koopman Theory.} The Koopman operator theory, originating from the 1930s, furnishes a framework to understand dynamical systems~\citep{koopman_1931,10.2307/1968537}. The theory has evolved, incorporating tools such as Dynamic Mode Decomposition (DMD)~\citep{schmid2010dynamic} (connected to Koopman mode decomposition~\citep{Mezic2005,rowley2009spectral}) extended-DMD~\citep{williams2015data,dietrich2020koopman,redman2022algorithmic,brunton2017chaos,arbabi2017ergodic,kamb2020time,tu2014dynamic,brunton2016extracting}, and machine learning approaches~\citep{koopman_pde_Lusch_2018,li2019learning,azencot2020forecasting,rice2020analyzing}. While there have been successful applications in neural network optimization~\citep{dogra2020optimizing,dmd_nn,redman2021operator} and quantum mechanics~\citep{koopman_quantum_control,quantum_koopman}, the linking of this theory with quantum natural gradient and overparameterization theory for optimization acceleration, as we have done, is novel to our knowledge. Ours and the above contributions are built upon works on machine learning using Koopman operator theory~\citep{mohr2020koopman,mohr2020predicting,naiman2021koopman,vsimanek2022learning,liang2022credit}.

\section{Background}

\paragraph{Variational Quantum Algorithm (VQA).} For a quantum mechanical system with $N$ qubits, the key object that contains all the information of the system is called a wave function $\psi$. It is an $l_2$-normalized complex-valued vector in the $2^N$-dimensional Hilbert space. A parameterized quantum circuit encodes a wave function as $\psi_{\boldsymbol{\theta}}$ using a set of parameters $\boldsymbol{\theta} \in \mathbb{R}^{n_{\mathrm{params}}}$ via an ansatz layer on a quantum circuit in a quantum computer, as shown in the top-left part of Figure~\ref{fig:scheme_1}.
The number of parameters, $n_{\mathrm{params}}$, is typically chosen to be polynomial in the number of qubits $N$, which scales much slower than the $2^N$-scaling of the dimension of $\psi$ itself in the original Hilbert space.

Variational quantum eigensolver (VQE) aims to solve minimal energy wave function of a Hamiltonian with parameterized quantum circuit. A Hamiltonian $\mathcal{H}$ describes interactions in a physical system, which mathematically is a Hermitian operator acting on the wave functions. The energy of a wave function is given by $\mathcal{L}(\psi) = \langle \psi | \mathcal{H} \psi \rangle$. VQE utilizes this as a loss function to find the optimal $\boldsymbol{\theta}^* = \argmin_{\boldsymbol{\theta}} \mathcal{L}(\boldsymbol{\theta}) = \argmin_{\boldsymbol{\theta}} \langle \psi_{\boldsymbol{\theta}} | \mathcal{H} \psi_{\boldsymbol{\theta}} \rangle$.
Quantum machine learning follows a similar setup, aiming to minimize $\mathcal{L}(\boldsymbol{\theta})$ involved data with parameterized quantum circuits $\psi_{\boldsymbol{\theta}}$, which in QML are usually referred to as quantum neural networks.

To minimize the loss function, one can employ a gradient-based classical optimizer such as Adam, which requires calculating the gradient $\partial \mathcal{L} / \partial \theta_i$. 
In the quantum case, one usually has to explicitly evaluate the loss with a perturbation in each direction $i$, for example, using the parameter-shift rule~\citep{PhysRevA.98.032309,PhysRevA.99.032331}: $(\mathcal{L}(\theta_i+\pi/2)-\mathcal{L}(\theta_i-\pi/2)) / 2$. This leads to a linear scaling of $n_{\mathrm{params}}$ computational cost, making quantum optimization significantly more expensive than classical backpropagation which computational complexity is independent of $n_{\mathrm{params}}$, while the required memory is still proportional to $n_{\mathrm{params}}$. It is worth noting that the classical computational components involved in VQE, even including training neural-network-based algorithms in the following sections, typically are much cheaper than the quantum gradient cost given the scarcity of quantum resources in practice.

\paragraph{Quantum Natural Gradient.}

The quantum natural gradient method~\citep{Stokes2020quantumnatural} generalizes the classical natural gradient method in classical machine learning~\citep{amari1998natural} by extending the concept of probability to complex-valued wave functions. It is also theoretically connected to imaginary time evolution~\citep{Stokes2020quantumnatural}. In the context of parameter optimization, the natural gradient for updating the parameter $\theta$ is governed by a nonlinear differential equation $\frac{d}{dt}\boldsymbol{\theta}(t) = -\eta F^{-1} \nabla_{\boldsymbol{\theta}} \mathcal{L}(\boldsymbol{\theta}(t))$,
where $\eta$ denotes the scalar learning rate, and $F$ represents the quantum Fisher Information matrix defined as $
F_{ij} = \langle \partial \psibst / \partial \theta_i | \partial \psibst / \partial \theta_j \rangle - \langle \partial \psibst / \partial \theta_i | \psibst \rangle \langle \psibst | \partial \psibst / \partial \theta_j \rangle.$

\paragraph{Koopman Operator Learning.} Consider a dynamical system characterized by a set of state variables ${x(t) \in \mathbb{R}^n}$, governed by a transition function $T$ such that $x(t+1)=T(x(t))$. According to the Koopman operator theory articulated by Koopman, a linear operator $\mathcal{K}$ and a function $g$ exist, satisfying $\mathcal{K} g(x(t)) = g(T(x(t))) = g(x(t+1))$,
where $\mathcal{K}$ represents the Koopman operator. Generally, this operator can function in an infinite-dimensional space. However, when $\mathcal{K}$ is restricted to a finite dimensional invariant subspace with $g: \mathbb{R}^n \rightarrow \mathbb{R}^m$, the Koopman operator can be depicted as a Koopman matrix $K \in \mathbb{R}^{m \times m}$.
Data acquired from the dynamics are needed to compute the Koopman operator~\citep{budivsic2012applied,brunton2021modern}.
The standard Dynamic Mode Decomposition (DMD) approach assumes $g$ to be the identity function, predicated on the notion that the underlying dynamics of $x$ are approximately linear, \textit{i.e.}, $T$ operates as a linear function. The extended-DMD method broadens this scope, utilizing additional feature functions such as polynomial and trigonometric functions as the basis functions for $g$. Further enhancing this approach, machine learning methods for the Koopman operator leverage neural networks as universal approximators for learning $g$~\citep{koopman_pde_Lusch_2018}.

\section{QuACK - Quantum-circuit Alternating Controlled Koopman Operator Learning}\label{sec:quack}

\begin{wrapfigure}{r}{0.5\textwidth}
\vspace{-12pt}
\centering
\begin{minipage}{0.48\textwidth}
\small
\begin{algorithm}[H]
\small
\caption{QuACK}
\label{alg:scheme}
\begin{algorithmic}
\STATE\textbf{Input:} Quantum circuit $\psi_{\boldsymbol{\theta}}$, Loss function $\mathcal{L} (\cdot)$, Iterations $n_{\mathrm{iter}}$, Koopman operator learning parameters $m$ (num. simulation. steps $n_{\mathrm{sim}})$, $n_{\mathrm{DMD}}$ (num. DMD steps).
\STATE \textbf{Output:} Optimal parameters $\boldsymbol{\theta}^{*}$
\STATE \textbf{Initialize:} $\boldsymbol{\theta} (t_0)$ randomly
\FOR{$i = 0$ to $n_{\mathrm{iter}} - 1$}
\STATE \emph{Quantum optimization or QML:}
\FOR{$k = 0$ to $m - 1$}
\STATE Compute gradient $\nabla_{\boldsymbol{\theta}} \mathcal{L}(\boldsymbol{\theta}(t_k))$ 
and update $\boldsymbol{\theta} (t_{k}) \rightarrow \boldsymbol{\theta} (t_{k+1})$ using classical gradient-based optimizer
\STATE Compute $\mathcal{L} (\boldsymbol{\theta} (t_{k+1}))$
\ENDFOR
\STATE \emph{Koopman operator learning:}
\STATE Train Koopman operator $\mathcal{K}$ and predict optimization trajectory
\STATE \emph{Optimal parameters selection:}
\FOR {$k = m + 1$ to $m+n_{\mathrm{DMD}}$}
\STATE Compute $\mathcal{L} (\boldsymbol{\theta} (t_{k}))$
\ENDFOR
\STATE Determine $t_{\mathrm{opt}}$ and set $\boldsymbol{\theta} (t_{0}) \leftarrow \boldsymbol{\theta} (t_{\mathrm{opt}})$
\ENDFOR
\end{algorithmic}
\end{algorithm}
\end{minipage}
\vspace{-12pt}
\end{wrapfigure}
\paragraph{QuACK.} Our QuACK algorithm, illustrated in Algorithm~\ref{alg:scheme} and Figure~\ref{fig:scheme_1}, employs a quantum circuit $\psi_{\boldsymbol{\theta}}$ with parameters $\boldsymbol{\theta}$ for quantum optimization or QML tasks. In Panel (a) of Figure~\ref{fig:scheme_1} the loss function $\mathcal{L} (\cdot)$ is stochastically evaluated through quantum measurements, and a classical optimizer updates the parameters. 
In Panel (b) following $m$ gradient optimization steps, we obtain a time series of parameter updates $\boldsymbol{\theta} (t_0), \boldsymbol{\theta} (t_1), \dots, \boldsymbol{\theta} (t_m)$. \footnote{In this work we will interchangibly use $m$ and $n_\mathrm{sim}$ to denote the same hyperparameter. For ease of notation, we use $m$ when indexing or looping, and $n_\mathrm{sim}$ when we control for that parameter and plot dependencies.} Then in Panel (c) this series is utilized by the Koopman operator learning algorithm to find an embedding for approximately linear dynamics. In Panel (d) this approach predicts the parameter updates for $n_{\mathrm{DMD}}$ future gradient dynamics steps and calculates $\boldsymbol{\theta} (t_{m+1}), \boldsymbol{\theta} (t_{m+2}), \dots, \boldsymbol{\theta} (t_{m+n_{\mathrm{DMD}}})$.

In each time step, the parameters are set directly in the quantum circuit for loss function evaluation via quantum measurements. This procedure has constant cost in terms of the number of parameters, same as the forward evaluation cost of the loss function. From the $n_{\mathrm{DMD}}$ loss function values,
we identify the lowest loss and the corresponding optimal time $t_{\mathrm{opt}} = \argmin_{t_{m} \leq t \leq t_{m + n_{\mathrm{DMD}}}} \mathcal{L} (\boldsymbol{\theta} (t))$. This step includes the last VQE iteration $t_m$ to prevent degradation in case of inaccurate DMD predictions. The algorithm iteratively alternates the simulation steps with the Koopman steps for $n_\mathrm{iter}$ steps, similarly to~\citep{dmd_nn}. To facillate the Koopman operator learning algorithm, we introduce DMD, Sliding Window DMD and neural DMD into QuACK as follows.

\paragraph{DMD and Sliding Window DMD.} Dynamic Mode Decomposition (DMD) employs a linear fit for vector dynamics $\boldsymbol{\theta} \in \mathbb{R}^n$, where $\boldsymbol{\theta} (t_{k+1}) = K \boldsymbol{\theta} (t_k)$. By concatenating $\boldsymbol{\theta}$ at $m + 1$ consecutive times, we define $\mathbf{\Theta}(t_k) \colonequals [\boldsymbol{\theta}(t_k) \ \boldsymbol{\theta}(t_{k + 1}) \cdots \boldsymbol{\theta}(t_{k + m})]$. We create data matrices $\mathbf{\Theta}(t_0)$ and $\mathbf{\Theta}(t_1)$, the latter being the one-step evolution of the former. In approximately linear dynamics, $K$ is constant for all $t_k$, and hence $\mathbf{\Theta} (t_1) \approx K \mathbf{\Theta} (t_0)$. The best fit occurs at the Frobenius loss minimum, given by $K = \mathbf{\Theta} (t_1) \mathbf{\Theta}(t_0)^+$, where $+$ is the Moore-Penrose inverse.

When the dynamics of $\boldsymbol{\theta}$ is not linear, we can instead consider a time-delay embedding with a sliding window and concatenate the steps to form an extended data matrix~\citep{dylewsky2022principal} 
\begin{equation}
\begin{split}
\mathbf{\Phi} (\mathbf{\Theta} (t_0)) & =
\left[\begin{matrix}
\boldsymbol{\phi}(t_{0}) & \boldsymbol{\phi}(t_{1}) & \cdots & \boldsymbol{\phi}(t_{m})\end{matrix}\right]   =
                                    \left[\begin{matrix}
									\boldsymbol{\theta}(t_0) & \boldsymbol{\theta}(t_1) & \cdots & \boldsymbol{\theta}(t_{m}) \\
									\boldsymbol{\theta}(t_1) & \boldsymbol{\theta}(t_2) & \cdots & \boldsymbol{\theta}(t_{m+1}) \\
									\vdots & \vdots & \ddots & \vdots \\
									\boldsymbol{\theta}(t_d) & \boldsymbol{\theta}(t_{d+1}) & \cdots & \boldsymbol{\theta}(t_{m+d}) \end{matrix}\right]
\end{split}
.
\label{eq:delay_embed_data_matrix}
\end{equation}
$\mathbf{\Phi}$ is generated by a sliding window of size $d + 1$ at $m + 1$ consecutive time steps.
Each column of $\mathbf{\Phi}$ is a time-delay embedding for $\mathbf{\Theta}$, and the different columns $\boldsymbol{\phi}$ in $\mathbf{\Phi}$ are embeddings at different starting times. The time-delay embedding captures some nonlinearity in the dynamics of $\boldsymbol{\theta}$, with $\mathbf{\Theta}(t_{d+1}) \approx K \mathbf{\Phi}(\mathbf{\Theta}(t_0))$,  
where $K \in \mathbb{R}^{n \times n (d+1)}$. The best fit is given by 
\begin{equation}
\begin{aligned}
    K =& \argmin_{K} \| \mathbf{\Theta} (t_{d+1}) - K \mathbf{\Phi} (\mathbf{\Theta}(t_0)) \|_F
    =& \mathbf{\Theta} (t_{d+1}) \mathbf{\Phi} (\mathbf{\Theta}(t_0))^+
    .
\label{eq:sw_k}
\end{aligned}
\end{equation}

During prediction, we start with $\boldsymbol{\theta} (t_{m+d+2}) = K \boldsymbol{\phi} (t_{m+1})$ and update from $\boldsymbol{\phi} (t_{m+1})$ to $\boldsymbol{\phi} (t_{m+2})$ by removing the oldest data and adding new predicted data. This iterative prediction is performed via $\boldsymbol{\theta} (t_{k + d + 1}) = K \boldsymbol{\phi} (t_{k})$. Unlike the approach of \citet{dylewsky2022principal} , we do not use an additional SVD before DMD, and our matrix $K$ is non-square. We term this method Sliding Window DMD (SW-DMD), with standard DMD being a specific case when the sliding window size is 1 ($d = 0$).
The time-delay embedding is related Takens' theorem~\citep{10.1007/BFb0091924} with similar implementations in Hankel DMD~\citep{arbabi2017ergodic} and streaming DMD~\citep{hemati2014dynamic,doi:10.1137/21M144983X}.

\paragraph{Neural DMD.} To provide a better approximation to the nonlinear dynamics, we ask whether the hard-coded sliding window transformation $\mathbf{\Phi}$ can be a neural network. Thus, by simply reformulating $\mathbf{\Phi}$ in Eq.~\ref{eq:sw_k} as a neural network, we formulate a natural neural objective for Koopman operator learning as $\argmin_{K,\alpha} \|\mathbf{\Theta}(t_{d+1})-K\mathbf{\Phi}_\alpha(\mathbf{\Theta}(t_0))\|_F,
$,
where $K \in \mathbb{R}^{N_{in}\times N_{out}}$ is a linear Koopman operator and $\mathbf{\Phi}_\alpha(\mathbf{\Theta}(t_0))$ is a nonlinear neural embedding by a neural network $\mathbf{\Phi}_\alpha$ with parameters $\alpha$. $\mathbf{\Phi}_\alpha \colonequals \mathrm{NN}_\alpha \circ \mathbf{\Phi}$ is a composition of the neural network architecture $\mathrm{NN}_\alpha$ and the sliding window embedding $\mathbf{\Phi}$ from the previous section.

Drawing from the advancements in machine learning for DMD, we introduce three methods: MLP-DMD, CNN-DMD, and MLP-SW-DMD. MLP-DMD uses a straightforward MLP architecture for $\mathbf{\Phi}$, comprising two linear layers with an ELU activation and a residual connection, as shown in
the Appendix.
Unlike MLP-DMD, CNN-DMD incorporates information between simulation steps, treating these steps as the temporal dimension and parameters as the channel dimension of a 1D CNN encoder shown in the Appendix.
To avoid look-ahead bias, we use causal masking of the CNN kernels, and the architecture includes two 1D-CNN layers with a bottleneck middle channel number of 1 and ELU activation to prevent overfitting. MLP-SW-DMD is similar to MLP-DMD but includes the time-delay embedding in the input, a feature we also add to CNN-DMD. As a result, MLP-DMD and CNN-DMD extend the principles of DMD, whereas MLP-SW-DMD and CNN-DMD with time-delay embedding generalize from SW-DMD. More details are available in 
the Appendix.

\paragraph{Limitations.} QuACK is currently only equipped with relatively simple neural network architecture while more advanced architectures like Transformer can be explored for future work. Even though QuACK reduces the number of necessary gradient steps largely and thus achieves speedup, training of VQA with a few gradient steps is still required to obtain the data for Koopman operator learning. $n_{\mathrm{sim}}$ for the training gradient steps in our design is a hyperparameter (see ablation study in Appendix), and we do not have a theoretical formula for it yet.

\section{Theoretical Results}
\label{sec:theory}

\paragraph{Connection to Quantum Nature Gradient.} 
While there exists a Koopman embedding that can linearize a given nonlinear dynamics, we provide more insights between Koopman theory and gradient-based dynamics under quantum optimization.
The dynamical equation from quantum natural gradients 
$d\boldsymbol{\theta}(t) / dt = -\eta F^{-1} \nabla_{\boldsymbol{\theta}} \mathcal{L}(\boldsymbol{\theta}(t))$
is equivalent to
$d\psibst(t) / dt = -\mathbb{P}_{\psibst} \mathcal{H} \psibst(t)$,
where $F$ is the quantum Fisher information matrix, and $\mathbb{P}_{\psibst}$ represents a projector onto the manifold of the parameterized quantum circuit. When the parameterized quantum circuit possesses full expressivity, it covers the entire Hilbert space, resulting in
$d\psibst(t) / dt = -\mathcal{H}\psibst(t)$.
This is a linear differential equation in the vector space of unnormalized wave functions. The normalized $\psi_{\btheta}$ from the parameterized quantum circuit together with additional normalization factor function $N(\btheta)$ can serve as an embedding to linearize the quantum natural gradient dynamics. For a relative short time scale, the normalized $\psi_{\btheta}$ already provides a good linearization which only differs from the exact dynamics by the normalization change in the short time scale. The special underlying structure of quantum natural gradient may make it easier to learn the approximate linear dynamics for Koopman operator learning.

\paragraph{Connection to Overparameterization Theory.} Recent advancement on overparameterization theory~\citep{larocca2021theory,liu2023analytic,you2022convergence} finds that when the number of parameters in the quantum circuit exceed a certain limit, linear convergence of the variational quantum optimization under gradient descent is guaranteed. \citet{you2022convergence} shows that the normalized $\psi_{\btheta}$ from the parameterized quantum circuit in the overparatermization regime follows a dynamical equation which has a dominant part linear in  $\psi_{\btheta}$ with additional perturbation terms. Similar to the quantum natural gradient, the overparameterization regime could provide an effective structure to simplify the Koopman operator learning with approximate linear dynamics.

\subsection{Stability Analysis}

We first note that the direct application of DMD without alternating the controlled scheme will lead to unstable or trivial performance in the asymptotic limit.
\begin{theorem} 
Asymptotic DMD prediction for quantum optimization is trivial or unstable.
\label{thm:asymp_dmd}
\end{theorem}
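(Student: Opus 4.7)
The plan is to analyze the asymptotic behavior of repeated application of the fitted Koopman matrix $K$ from DMD and argue that it must fall into one of three regimes determined by the spectral radius $\rho(K)$, and that only a measure-zero case avoids the ``trivial or unstable'' dichotomy. The starting point is that pure DMD without the alternating reset produces predictions of the form $\btheta(t_{m+n}) = K^{n} \btheta(t_m)$ for arbitrarily large $n$, so everything is governed by the powers $K^{n}$.

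First, I would write the Jordan (or, generically, eigendecomposition) of $K$ and expand $K^{n} \btheta(t_m) = \sum_i c_i \lambda_i^{n} v_i$ (plus polynomial-in-$n$ corrections from any nontrivial Jordan blocks). Then I would separate cases by the spectral radius $\rho(K) = \max_i |\lambda_i|$. If $\rho(K) < 1$, every mode decays geometrically, so $K^{n}\btheta(t_m) \to 0$, yielding the trivial prediction (which in particular cannot match any nonzero optimum $\btheta^{*}$). If $\rho(K) > 1$, at least one mode whose projection onto $\btheta(t_m)$ is nonzero blows up geometrically, which is the ``unstable'' case and would drive the quantum loss $\mathcal{L}(\btheta)$ outside any compact region where the ansatz is meaningful.

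The remaining case, $\rho(K) = 1$, is the boundary regime that needs to be argued away. Here I would point out that for DMD to predict convergence to a nontrivial fixed parameter $\btheta^{*}$, we would need $K \btheta^{*} = \btheta^{*}$, i.e.\ an eigenvalue exactly equal to $1$ with eigenvector aligned with $\btheta^{*}$, while \emph{every other} eigenvalue must still satisfy $|\lambda_i| \le 1$ (otherwise we are back in the unstable case) and in fact $|\lambda_i|<1$ or appear in conjugate pairs on the unit circle with limit-cycle behavior. This is a codimension-$\ge 1$ algebraic condition on the data matrices $\bm{\Theta}(t_0), \bm{\Theta}(t_1)$ and is almost never produced by the Moore--Penrose least-squares fit $K = \bm{\Theta}(t_1)\bm{\Theta}(t_0)^{+}$ on a generic gradient-descent trajectory; I would formalize this by noting that the gradient trajectory itself is not a linear orbit (the quantum natural-gradient ODE $d\btheta/dt = -\eta F^{-1}\nabla\mathcal{L}$ is genuinely nonlinear in $\btheta$), so the least-squares residual is strictly positive and the fitted $K$ has no reason for its spectrum to align with the unit circle.

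The main obstacle will be the boundary case $\rho(K) = 1$: handling it cleanly requires either a genericity statement (``for almost every trajectory'') or a structural argument that the quantum-natural-gradient trajectory, being a flow on the normalized-wavefunction manifold, projects to a parameter sequence whose linear least-squares fit has spectral radius bounded away from one. I expect a clean way to do this is to observe that if the true trajectory converges geometrically with rate $r < 1$ to $\btheta^{*}\neq 0$, then the best linear fit through that trajectory has an eigenvalue close to $r$ but its fixed subspace is not aligned with $\btheta^{*}$, so iterating $K$ either sends the prediction toward $0$ (if the contractive modes dominate) or along a growing direction (if any mode escapes the unit disk). The remaining steps --- bookkeeping of the Jordan-block polynomial factors and a short justification that the alternating controlled reset in QuACK precisely avoids these regimes by periodically re-anchoring $\btheta$ to a freshly optimized point --- would conclude the argument and motivate the algorithmic design in Section~\ref{sec:quack}.
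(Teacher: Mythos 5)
Your core argument --- expand $K^{n}\btheta$ in the spectral/Jordan decomposition of $K$ and split on the spectral radius, with $\rho(K)<1$ giving the trivial collapse to zero and $\rho(K)\ge 1$ giving instability --- is exactly the paper's proof, so the skeleton is right. Two points of divergence are worth noting. First, your justification of the unstable case is physically wrong: when $\rho(K)>1$ and $\btheta$ grows without bound, the loss $\mathcal{L}(\btheta)=\langle\psibst|\mathcal{H}\psibst\rangle$ does \emph{not} leave any compact region, because it is bounded by the spectral range of $\mathcal{H}$ for every $\btheta$. The paper's key observation --- which your proposal lacks --- is that each parameterized gate $U(\theta)=e^{-i\theta P}$ is \emph{periodic} in $\theta$, so an unboundedly growing or oscillating parameter vector makes the unitary (and hence the circuit output) oscillate forever rather than diverge; that periodicity is what converts ``$\btheta$ is unbounded'' into ``the performance is unstable.'' You should replace the compactness claim with this argument. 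Second, you spend most of your effort arguing away the boundary case $\rho(K)=1$ by a genericity/measure-zero argument on the least-squares fit; the paper simply folds $|w_m|=1$ into the ``oscillating'' branch, since unit-modulus eigenvalues (generically complex, or real with non-aligned fixed subspace) still produce non-convergent parameter sequences and hence oscillating gates. Your genericity discussion is a reasonable refinement but is not needed once the periodicity argument is in place, and it leaves a loose end of its own (the exact eigenvalue-one, eigenvector-aligned case is excluded only ``almost surely,'' not structurally).
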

\begin{proof}
    The asymptotic dynamics from DMD prediction is given by $\boldsymbol{\theta} (T) = K^{T} \boldsymbol{\theta} (t_0)$ for $T \rightarrow \infty$. It follows that $\boldsymbol{\theta} (T) \rightarrow w_m^T v_m$, where $w_m$ and $v_m$ are the largest magnitude eigenvalue and the corresponding eigenvector of the Koopman operator $K$. If $|w_m| < 1$, then $\boldsymbol{\theta} (T)$ will converge to zero, and the quantum circuit will reach a trivial fixed state. If $|w_m|
    \geq 1$, $\boldsymbol{\theta} (T)$ will keep oscillating or growing unbounded. For a unitary gate $U(\theta)=e^{-i\theta P}$ where $P$ is a Pauli string, $U(\theta)$ is periodic with respect to $\theta$. The oscillation or unbounded growing behavior of $\boldsymbol{\theta} (T)$ will lead to oscillation of the unitary gate in the asymptotic limit resulting in unstable performance.
\end{proof}

The above issue is also found to exist in numerical results plotted in Appendix, indicating that the eigenvalues of the Koopman operators from quantum optimization dynamics can lead to trivial or unstable DMD prediction, which motivates us to develop QuACK. Indeed, our QuACK has controllable performances given by the following
\begin{theorem}
    In each iteration of QuACK, the optimal parameters $\boldsymbol{\theta} (t_{\mathrm{opt}})$ yield an equivalent or lower loss than the $m$-step gradient-based optimizer.
\label{theorem:control}
\end{theorem}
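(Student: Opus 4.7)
The plan is to observe that this statement is essentially a direct consequence of how the optimal-parameter-selection step in QuACK is defined, combined with the algorithmic design choice to include $t_m$ in the set of candidate time steps. So the proof will mostly consist of unpacking definitions rather than any substantive estimation.

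First I would recall that after $m$ gradient-based optimization steps inside one outer QuACK iteration, the gradient-based optimizer has produced the trajectory $\boldsymbol{\theta}(t_0), \boldsymbol{\theta}(t_1), \dots, \boldsymbol{\theta}(t_m)$, and in particular the $m$-step gradient-based optimizer returns $\boldsymbol{\theta}(t_m)$ with loss $\mathcal{L}(\boldsymbol{\theta}(t_m))$. Next I would recall from the description of QuACK in Section~\ref{sec:quack} that the optimal time $t_{\mathrm{opt}}$ is chosen as
\begin{equation*}
t_{\mathrm{opt}} = \argmin_{t_m \leq t \leq t_{m+n_{\mathrm{DMD}}}} \mathcal{L}(\boldsymbol{\theta}(t)),
\end{equation*}
where crucially the lower endpoint of the minimization range is $t_m$ itself (as emphasized in the text: ``This step includes the last VQE iteration $t_m$ to prevent degradation in case of inaccurate DMD predictions'').

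From this, the conclusion is immediate: since $t_m$ belongs to the feasible set of the $\argmin$, the minimum value is at most the value at $t_m$, giving $\mathcal{L}(\boldsymbol{\theta}(t_{\mathrm{opt}})) \leq \mathcal{L}(\boldsymbol{\theta}(t_m))$. Equality holds in the worst case where none of the Koopman-predicted parameters $\boldsymbol{\theta}(t_{m+1}), \dots, \boldsymbol{\theta}(t_{m+n_{\mathrm{DMD}}})$ improve on the loss at $t_m$, and strict inequality holds whenever at least one predicted parameter achieves a strictly lower loss. This proves the equivalent-or-lower-loss claim.

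There is essentially no substantive obstacle here; the only thing to be careful about is that the evaluations $\mathcal{L}(\boldsymbol{\theta}(t_{m+k}))$ for $k = 1, \dots, n_{\mathrm{DMD}}$ are performed directly via quantum measurement on the predicted parameters (as noted in the QuACK description, this has constant cost in the number of parameters), so the comparison of losses is meaningful and not biased by any surrogate. I would close by remarking that this theorem is what justifies combining Koopman-based prediction with gradient descent without risk of regression: the Koopman predictor can only help (or at worst leave unchanged) the outer-iteration loss, which contrasts sharply with the unconditional asymptotic use of DMD alone that is ruled out by Theorem~\ref{thm:asymp_dmd}.
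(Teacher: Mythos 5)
Your proof is correct and is essentially identical to the paper's own argument: both rest on the single observation that $t_m$ lies in the feasible set of the $\argmin$ defining $t_{\mathrm{opt}}$, so $\mathcal{L}(\boldsymbol{\theta}(t_{\mathrm{opt}})) \leq \mathcal{L}(\boldsymbol{\theta}(t_m))$. The extra remarks you add about when equality versus strict inequality holds and about the loss evaluations being genuine quantum measurements are accurate but not needed for the claim.
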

\begin{proof} From $t_{\mathrm{opt}} = \argmin_{t_{m} \leq t \leq t_{m + n_{\mathrm{DMD}}}} \mathcal{L} (\boldsymbol{\theta} (t))$, we have $\mathcal{L} (\boldsymbol{\theta}(t_{\mathrm{opt}})) \leq \mathcal{L} (\boldsymbol{\theta}(t_m))$ where $\mathcal{L} (\boldsymbol{\theta}(t_m))$ is the final loss from the $m$-step gradient-based optimizer. \end{proof}

As long as QuACK is able to capture certain dynamical modes that decrease the loss, then it will produce a lower loss even when the predicted dynamics does not exactly follow the same trend of the $m$-step gradient descent updates. 
We also note that it is possible for QuACK to converge to a different local minimum than the baseline gradient-based optimizer,
but our experiments generally demonstrate that our QuACK achieves accuracy the same as or even better than the baseline with much faster convergence.
Our procedure is robust against long-time prediction error and noise with much fewer gradient calculations, which can efficiently accelerate the gradient-based methods. Furthermore, our scheme has the following important \textit{implication}, the proof of which we provide in Appendix.

\begin{corollary}
QuACK achieves superior performance over the asymptotic DMD prediction.
\label{corollary:superior}
\end{corollary}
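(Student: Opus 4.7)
The plan is a two-case comparison built directly on top of Theorem~\ref{thm:asymp_dmd} and Theorem~\ref{theorem:control}. First I would invoke Theorem~\ref{thm:asymp_dmd} to pin down the asymptotic DMD regime: depending on whether the dominant Koopman eigenvalue $w_m$ satisfies $|w_m|<1$ or $|w_m|\geq 1$, the asymptotic iterates $\boldsymbol{\theta}(T)$ either decay to zero (producing a fixed, problem-independent quantum state with an associated trivial loss $\mathcal{L}_{\mathrm{triv}}$) or oscillate and possibly grow without bound, so the induced loss sequence has no stable accumulation point below the reference loss and can in fact be arbitrarily poor in the unitary-gate periodicity argument already used in the proof of Theorem~\ref{thm:asymp_dmd}.

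Next I would analyze QuACK inductively across its outer iterations. Theorem~\ref{theorem:control} gives, in each outer iteration, $\mathcal{L}(\boldsymbol{\theta}(t_{\mathrm{opt}})) \leq \mathcal{L}(\boldsymbol{\theta}(t_m))$ where $\mathcal{L}(\boldsymbol{\theta}(t_m))$ is the loss after $m$ inner gradient-based steps. Since QuACK warm-starts the next iteration at $\boldsymbol{\theta}(t_{\mathrm{opt}})$, an easy induction shows that after $n_{\mathrm{iter}}$ rounds the QuACK loss is bounded above by the loss of $n_{\mathrm{iter}}\cdot m$ steps of the underlying optimizer. Consequently the QuACK loss is monotonically non-increasing in the outer index and inherits any convergence guarantees of the underlying gradient-based optimizer, including the linear convergence rates from overparameterization discussed in Section~\ref{sec:theory}.

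Combining the two reductions yields the corollary. In the $|w_m|<1$ case, QuACK eventually achieves loss below $\mathcal{L}_{\mathrm{triv}}$ as soon as the gradient-based optimizer does, which follows from the standard descent guarantees for any $\mathcal{H}$ whose ground-state energy is strictly below $\mathcal{L}_{\mathrm{triv}}$; in the $|w_m|\geq 1$ case, QuACK's loss stays bounded and non-increasing while asymptotic DMD's loss oscillates or diverges. The main obstacle, and the part I would spend most care on, is making ``superior performance'' precise enough to cover the edge case where $\mathcal{L}_{\mathrm{triv}}$ is already near the optimum. The cleanest formulation is a non-strict bound ($\leq$) together with a strict improvement whenever a single inner step of the gradient-based optimizer strictly decreases the loss, which is a mild and standard assumption on the Hamiltonian and the initialization; this avoids having to quantify the oscillation amplitude in the unstable branch of Theorem~\ref{thm:asymp_dmd} and lets the entire argument run through Theorem~\ref{theorem:control}.
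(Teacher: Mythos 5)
Your proposal follows essentially the same route as the paper's own proof: invoke Theorem~\ref{thm:asymp_dmd} to show asymptotic DMD degenerates (trivial fixed point or oscillation/divergence) and Theorem~\ref{theorem:control} to show QuACK's controlled selection of $\boldsymbol{\theta}(t_{\mathrm{opt}})$ never does worse than the underlying gradient steps, then combine the two. You are in fact more careful than the paper — the explicit induction over outer iterations and the discussion of the edge case where the trivial loss is already near-optimal are refinements the paper's two-sentence proof omits — but the core argument is identical.
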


\subsection{Complexity and Speedup Analysis\label{sec:complexity}}

To achieve a certain target loss $\mathcal{L}_{\mathrm{target}}$ near convergence, the traditional gradient-based optimizer as a baseline takes
$T_{{b}}$
gradient steps.
To achieve the same $\mathcal{L}_{\mathrm{target}}$, QuACK takes $T_{{Q, t}} = T_{{Q, 1}} + T_{{Q, 2}}$ steps
where $T_{{Q, 1}}$ ($T_{{Q, 2}}$)
are the total numbers of $\mathrm{QuACK}$ training (prediction)  steps.
We denote the ratio between the computational costs of the baseline and QuACK as $s$, which serves as the speedup ratio from QuACK. Our definition of speedup has some difference with the speedup defined in Ref.~\citep{dogra2020optimizing} although shares a similar spirit. In gradient-based optimization, the computational costs of each step of QuACK training and prediction are different, with their ratio defined as $f(p)$ where $p := n_{\mathrm{params}}$. In general, $f(p)$ is $\Omega (p)$ since only QuACK training, not prediction, involves gradient steps.
\begin{theorem}
With a baseline gradient-method, the speedup ratio $s$ is in the following range
\begin{equation}
\begin{aligned}
   a \leq s 
   \leq a \frac{f(p) (n_{\mathrm{sim}} + n_{\mathrm{DMD}})} {f(p) n_{\mathrm{sim}} +  n_{\mathrm{DMD}}}.
\end{aligned}
\label{eq:speedup_bounds}
\end{equation}
where
$a := T_b / T_{Q, t}$.
In the limit of $n_{\mathrm{iter}} \rightarrow \infty$
the upper bound can be achieved.
\label{theorem:speedup}
\end{theorem}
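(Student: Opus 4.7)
The plan is to convert everything into a single ratio of total computational costs, then bound that ratio by varying the mix of cheap (prediction) and expensive (training) steps subject to the constraints of the QuACK schedule. Let $C_1$ denote the cost of one gradient (training) step and $C_2$ the cost of one prediction step, so by definition $f(p) = C_1/C_2$. The baseline total cost is $T_b \cdot C_1$, while the QuACK total cost is $T_{Q,1} \cdot C_1 + T_{Q,2} \cdot C_2$. Dividing gives
\begin{equation*}
s \;=\; \frac{T_b\, f(p)}{T_{Q,1}\, f(p) + T_{Q,2}} \;=\; a\cdot \frac{(T_{Q,1}+T_{Q,2})\, f(p)}{T_{Q,1}\, f(p) + T_{Q,2}},
\end{equation*}
which isolates the only factor depending on the training/prediction split.

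For the lower bound, I would simply note that $f(p) \geq 1$ (since $f(p)=\Omega(p)$ and the relevant regime has many parameters), hence $T_{Q,1} f(p) + T_{Q,2} \le (T_{Q,1}+T_{Q,2}) f(p)$, which immediately yields $s \ge a$. For the upper bound, set $r := T_{Q,2}/T_{Q,1}$ and rewrite $s/a = (1+r)/(1+r/f(p))$, which is monotonically increasing in $r$ for $f(p) > 1$. The QuACK schedule performs exactly $n_{\mathrm{sim}}$ training steps followed by exactly $n_{\mathrm{DMD}}$ prediction steps in each iteration, so across any collection of full iterations the ratio is $r = n_{\mathrm{DMD}}/n_{\mathrm{sim}}$; partial iterations can only reduce $r$ (since prediction follows training within an iteration, an incomplete iteration has a prediction fraction no larger than the complete one, after properly accounting for when termination is triggered). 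Substituting $r = n_{\mathrm{DMD}}/n_{\mathrm{sim}}$ into the expression above and simplifying gives exactly the stated upper bound.

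Finally, for the achievability claim in the $n_{\mathrm{iter}}\to\infty$ limit, I would observe that over $n_{\mathrm{iter}}$ complete iterations we have $T_{Q,1} = n_{\mathrm{iter}} n_{\mathrm{sim}}$ and $T_{Q,2} = n_{\mathrm{iter}} n_{\mathrm{DMD}}$ exactly, so $r = n_{\mathrm{DMD}}/n_{\mathrm{sim}}$ on the nose and the upper bound is attained. If one wishes to be careful about a final partial iteration, its contribution to both $T_{Q,1}$ and $T_{Q,2}$ is $O(n_{\mathrm{sim}}+n_{\mathrm{DMD}})$ and thus vanishes relative to the $\Theta(n_{\mathrm{iter}})$ totals, so the ratio $s/a$ converges to its maximum.

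The main obstacle I anticipate is the bookkeeping around partial iterations: one needs to justify carefully that the prediction/training ratio $r$ never exceeds $n_{\mathrm{DMD}}/n_{\mathrm{sim}}$ regardless of where the algorithm happens to terminate within an iteration, and that any boundary effect is sub-leading so the supremum is actually achieved in the limit. Everything else is direct algebraic manipulation once the cost model $C_1 = f(p) C_2$ is fixed and the monotonicity of $s/a$ in the prediction fraction is recognized.
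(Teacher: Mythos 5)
Your proposal is correct and follows essentially the same route as the paper: both reduce $s$ to $a\,\frac{f(p)(T_{Q,1}+T_{Q,2})}{f(p)T_{Q,1}+T_{Q,2}}$ from the cost model in which a training step costs $f(p)$ times a prediction step, then bound the training/prediction split by $T_{Q,1}/T_{Q,2}\geq n_{\mathrm{sim}}/n_{\mathrm{DMD}}$ (your $r\leq n_{\mathrm{DMD}}/n_{\mathrm{sim}}$) using the fact that training precedes prediction within each iteration, and let the boundary effect of the final partial iteration vanish as $n_{\mathrm{iter}}\rightarrow\infty$. Your explicit monotonicity-in-$r$ argument and the $f(p)\geq 1$ step for the lower bound are just slightly more spelled-out versions of what the paper does after it neglects the common measurement-cost prefactor and the classical overheads $c_b$, $c_Q$.
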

We present the proof in the Appendix.
$a$ is a metric of the accuracy of Koopman prediction. Higher $a$ means better prediction, and a perfect prediction has $a = 1$.
The exact form of $f(p)$ depends on the details of the gradient method. For example, we have $f(p) = 2p + 1$ for parameter-shift rules and $f(p) \sim p^2 + p$ for quantum natural gradients~\citep{mcardle2019variational}. If $a$ is fixed, then the upper bound of $s$ in Eq.~\ref{eq:speedup_bounds} increases when $p$ increases with an asymptote $a (n_{\mathrm{sim}} + n_{\mathrm{DMD}}) / n_{\mathrm{sim}}$ at $p \rightarrow \infty$.
The upper bound in Eq.~\ref{eq:speedup_bounds} implies $s \leq a f(p)$ where the equal sign is achieved in the limit $n_{\mathrm{DMD}} / n_{\mathrm{sim}} \rightarrow \infty$. If $a = 1$ one could in theory achieve $f(p)$-speedup, at least linear in $n_{\mathrm{params}}$.

In practice, the variable $a$ is influenced by $n_{\mathrm{sim}}$ and $n_{\mathrm{DMD}}$. A decrease in $a$ can occur when $n_{\mathrm{sim}}$ diminishes or $n_{\mathrm{DMD}}$ enlarges, as prediction accuracy may drop if the optimization dynamics length for QuACK training is insufficient or the prediction length is too long. Moreover, estimating the gradient for each parameter requires a specific number of shots $n_{\mathrm{shots}}$ on a fixed quantum circuit. Quantum measurement precision usually follows the standard quantum limit
$\sim 1/\sqrt{n_{\mathrm{shots}}}$,
hence a finite $n_{\mathrm{shots}}$ may result in noisy gradients for Koopman operator learning, influencing $a$. Intriguingly, when the prediction aligns with the pure VQA, QuACK's loss might decrease faster than pure baseline VQA, leading to $a > 1$ and a higher speedup. This could be due to dominant DMD spectrum modes with large eigenvalues aligning with the direction of fast convergence in the $\boldsymbol{\theta}$-space.

\section{Experiments}\label{sec:exp}

\subsection{Experimental Setup\label{sec:exp_setup}} 
We adopt the $\textit{Relative loss}$ metric for benchmark, which is $(\mathcal{L} - \mathcal{L}_{\text{min, full VQA}}) / (\mathcal{L}_{\text{initial, full VQA}} - \mathcal{L}_{\text{min, full VQA}})$, where $\mathcal{L}$ is the current loss, and $\mathcal{L}_{\text{initial, full VQA}}$ and $\mathcal{L}_{\text{min, full VQA}}$ are the initial and minimum loss of full VQA.
We use pure VQE~\citep{peruzzo2014variational} and pure QML~\citep{Ren2022Experimental} as the baseline.
We use $\mathcal{L}_{\mathrm{target}}$ at 1\% relative loss for VQE and the target test accuracy (0.5\% below maximum) for QML to compute 
the computational costs and the speedup $s$ as a metric of the performance of QuACK.
Our experiments are run with Qiskit~\citep{Qiskit}, Pytorch~\citep{https://doi.org/10.48550/arxiv.1912.01703}, Yao~\citep{Luo2020Yao} (in Julia~\citep{Julia-2017}), and Pennylane~\citep{bergholm2018pennylane}.
Details of the architectures and hyperparameters of our experiments are in Appendix. More ablation studies for hyperparameters are also in Appendix.

\textbf{Quantum Ising Model.}
Quantum Ising model with a transverse field $h$ has the Hamiltonian $\mathcal{H} = - \sum_{i = 1}^{N } Z_i \otimes Z_{i + 1} - h \sum_{i = 1}^{N} X_i$ where $\{I_i, X_i, Y_i, Z_i\}$ are Pauli matrices for the $i$-th qubit. On the quantum circuit, we then use the RealAmplitudes ansatz and hardware-efficient ansatz~\citep{kandala2017hardware} which then define $\psi_{\btheta}$, and then $\mathcal{L} (\boldsymbol{\theta}) = \langle \psibst | \mathcal{H} \psibst \rangle$. We implement the VQE algorithm and QuACK for $h=0.5$ using gradient descent, quantum natural gradient, and Adam optimizers. 

\textbf{Quantum Chemistry.}
Quantum chemistry is of great interest as an application of quantum algorithms, and has a Hamiltonian $\mathcal{H} = \sum_{j=0}^{n_{P}} h_j P_j$ with $n_{P}$ polynomial in $N$, where $P_j$ are tensor products of Pauli matrices, and $h_j$ are the associated real coefficients. The loss function of quantum chemistry is typically more complicated than the quantum Ising model. We explore the performance of QuACK by applying it to VQE for the 10-qubit Hamiltonian of a LiH molecule with a interatomic distance 2.0~{\AA} provided in Pennylane, with Adam and the RealAmplitudes ansatz.

\textbf{Quantum Machine Learning.} In addtion to VQE, we consider the task of binary classification on a filtered MNIST dataset with samples labeled by digits ``1'' and ``9''.  
We use an interleaved block-encoding scheme for QML, which is shown to have generalization advantage~\citep{qc_kernel_beyond,Caro2021encodingdependent,Li2022Quantum, Ren2022Experimental} and recently realized in experiment~\citep{qc_adversarial}. We use a 10-qubit quantum circuit with stochastic gradient descent for QML and QuACK, and a layerwise partitioning~\cite{dogra2020optimizing} in $\boldsymbol{\theta}$ for neural DMD.
Similar to the target loss, we set the target test accuracy as 0.5\% below the maximum test accuracy to measure the closeness to the best test accuracy during the QML training, and use the ratios of cost between pure QML and QuACK achieving the target test accuracy as the speedup ratio.

\begin{figure}[t]
\centering
	\centering
	\begin{subfigure}[t!]{0.32\textwidth}
		\centering
		\includegraphics[width=\textwidth]{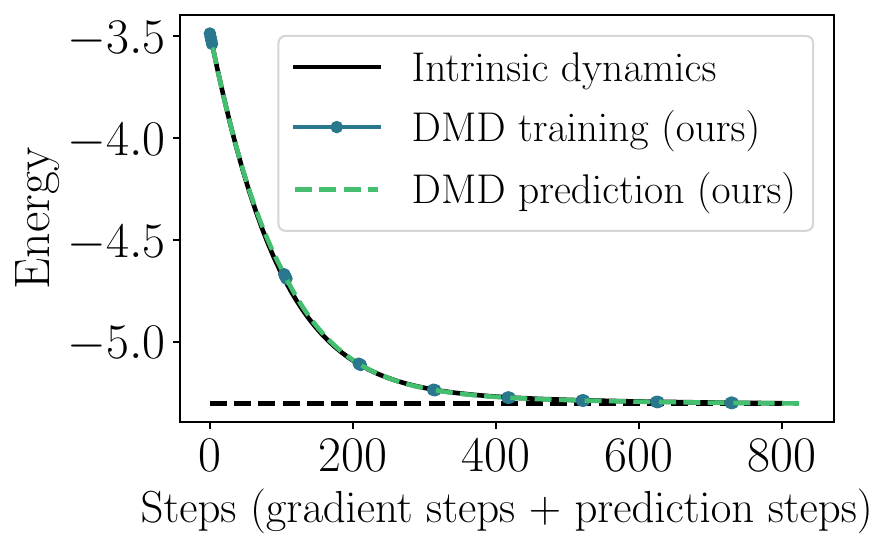}
		\caption{Quantum natural gradients.
  }
		\label{subfig:qng}
    \end{subfigure}
    \hfill
	\begin{subfigure}[ht]{0.32\textwidth}
    	\centering
    	\includegraphics[width=\textwidth]{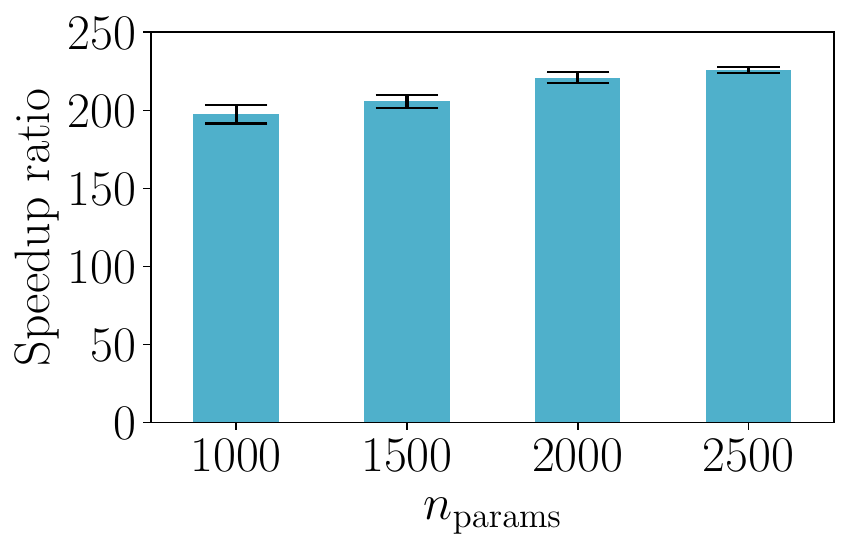}
    	\caption{Overparameterization.
     }
    	\label{subfig:speedup_op}
    \end{subfigure}
    \hfill
    \begin{subfigure}[ht]{0.32\textwidth}
    	\centering
    	\includegraphics[width=\textwidth]{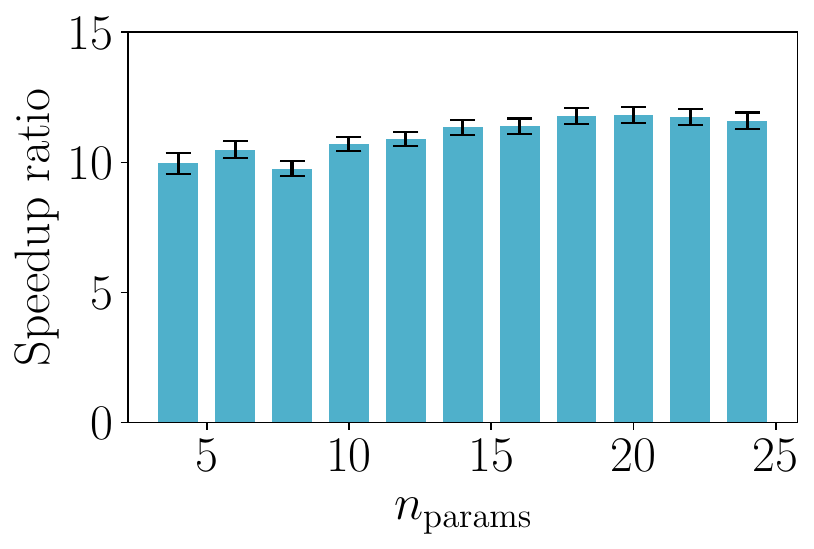}
    	\caption{Smooth optimization regimes. }
    	\label{subfig:speedup_smooth}
    \end{subfigure}
	\caption{
 Performance of our QuACK with the standard DMD in the following cases.
 (\subref{subfig:qng}) For quantum natural gradient, with short training (4 steps per piece) and long prediction (40 steps per piece), DMD accurately predicts the intrinsic dynamics of quantum optimization, and QuACK has 20.18x speedup. (\subref{subfig:speedup_op}) In the overparameterization regime, QuACK has >200x speedup with 2-5 qubits. (\subref{subfig:speedup_smooth}) In smooth optimization regimes, QuACK has >10x speedup with 2-12 qubits.
 }
	\label{fig:speedup}
\end{figure}
\begin{figure}[t]
	\centering
	\begin{subfigure}[t]{0.32\linewidth}
		\centering
		\includegraphics[width=\textwidth]{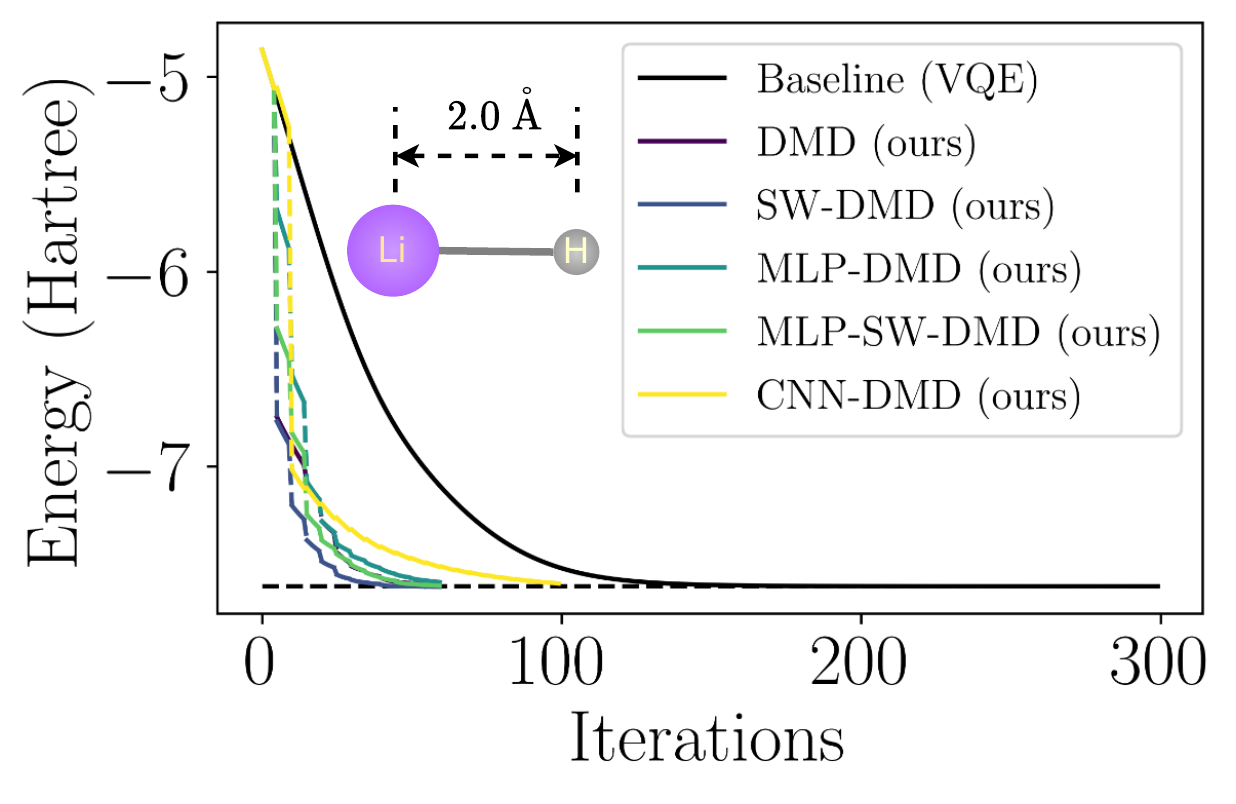}
		\caption{
        LiH molecule.
		}
		\label{subfig:lih}
    \end{subfigure}
    \hfill
	\begin{subfigure}[t]{0.32\linewidth}
    	\centering
    	\includegraphics[width=\textwidth]{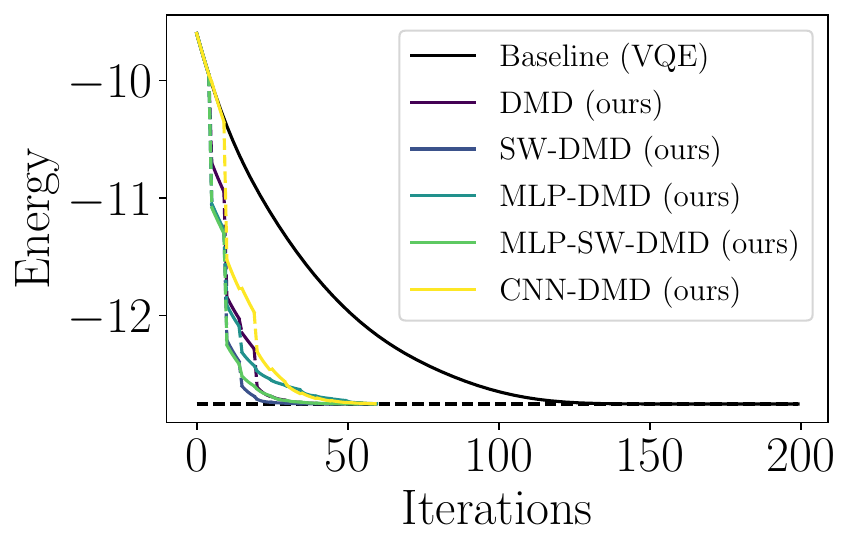}
    	\caption{
        Quantum Ising model.
        }
    	\label{subfig:adam}
    \end{subfigure}
    \hfill
    \begin{subfigure}[t]{0.32\linewidth}
    \includegraphics[width=\textwidth]{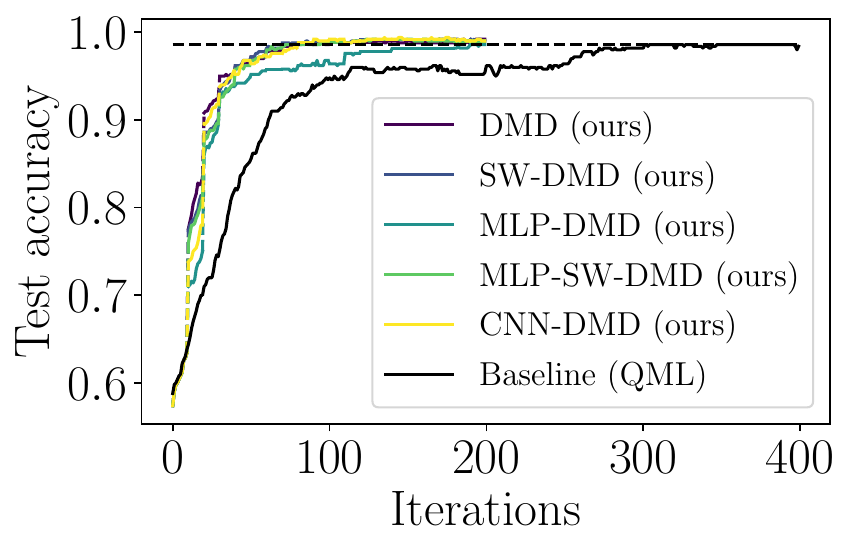}
    	\caption{
        Quantum machine learning.
    	}
        \label{fig:qml}
    \end{subfigure}
	\caption{Experimental results for (\subref{subfig:lih}) LiH molecule with 10 qubits using Adam (\subref{subfig:adam}) Quantum Ising model with 12 qubits using Adam (\subref{fig:qml}) test accuracy of binary classification in QML.
 The solid piecewise curves are true gradient steps, and the dashed lines connecting them indicate when the DMD prediction is applied to find $\boldsymbol{\theta} (t_{\mathrm{opt}})$ in our controlled scheme. 
 Our QuACK with all the DMD methods bring acceleration, with maximum speedups (\subref{subfig:lih}) 4.63x (\subref{subfig:adam}) 3.24x (\subref{fig:qml}) 4.38x.
 }
	\label{fig:main_results}
\end{figure}

\subsection{Accurate Prediction for Quantum Natural Gradients by QuACK\label{sec:qng}}

In Figure~\ref{subfig:qng}, we present the 5-qubit quantum Ising model results (learning rate $0.001$) with the standard DMD method in the QuACK framework.
We observe that the DMD prediction is almost perfect, i.e. $a \approx 1$.
The speedup in Figure~\ref{subfig:qng} is 20.18x
close to 21.19x, the theoretical speedup from the upper bound in Eq.~\ref{eq:speedup_bounds} under $a=1$. This experiment shows (1) the success of DMD in predicting the dynamics (2) the power of QuACK of accelerating VQA (3) the precision of our complexity and speedup analysis in Sec.~\ref{sec:complexity}.  We show 10-qubit results for all DMD methods in the Appendix.

\subsection{More than 200x Speedup near the Overparametrization Regime by QuACK}\label{sec:op}

As it is used by the overparameterization theory in~\citet{you2022convergence}, we use the 250-layer hardware-efficient ansatz~\citep{kandala2017hardware} on the quantum Ising model with gradient descent, with numbers of qubits $N = 2, 3, 4, 5$ so $n_{\mathrm{params}} = 1000, 1500, 2000, 2500$, all in the regime of $n_{\mathrm{params}} \geq (2^N)^2$, near the overparameterization regime~\citep{larocca2021theory}. In addition, we use a small learning rate 5e-6 so that the dynamics in the $\boldsymbol{\theta}$-space is approximately linear dynamics. We expect QuACK with the standard DMD to have good performance in this regime. In Figure~\ref{subfig:speedup_op}, for each $n_{\mathrm{params}}$, we randomly sample 10 initializations of $\boldsymbol{\theta}$ to calculate mean values and errorbars and obtain >200x speedup in all these cases. The great acceleration from our QuACK is an empirical validation and an application of the overparameterization theory for quantum optimization. On the other hand, the overparameterization regime is difficult for near-term quantum computers to realize due to the large number of parameters and optimization steps, and our QuACK makes its realization more feasible.

\subsection{More than 10x Speedup in Smooth Optimization Regimes by QuACK}\label{sec:smooth}

For the quantum Ising model with gradient descent and 2-layer RealAmplitudes, rather than the overparameterization regime, we consider a different regime with lower $n_{\mathrm{params}}$ with learning rate 2e-3 so that the optimization trajectory is smooth and the standard DMD is still expected to predict the dynamics accurately for a relatively long length of time.
With the number of qubits $N \in [2, 12]$  ($n_{\mathrm{params}} \in [4, 24]$) and 100 random samples for initialization of $\boldsymbol{\theta}$ for each $n_{\mathrm{params}}$, in Figure~\ref{subfig:speedup_smooth}, our QuACK achieves >10x speedup in all these cases. This shows the potential of our methods for this regime with fewer parameters than overparameterization, which is more realistic on near-term quantum hardware.

\begin{table}[!t]
\centering
\begin{minipage}{0.68\textwidth}
\fontsize{8.5pt}{8.5pt}\selectfont
  \centering
  \begin{tabular}{lcccccc}
    \toprule
 & & \multicolumn{5}{c}{Speedup} \\
\cmidrule(r){3-7}
\begin{tabular}[c]{@{}c@{}}Noise\\System \end{tabular} & $n_{\mathrm{shots}}$ & \begin{tabular}[c]{@{}c@{}}DMD\\(ours) \end{tabular} & \begin{tabular}[c]{@{}c@{}}SW-\\DMD\\(ours) \end{tabular} & \begin{tabular}[c]{@{}c@{}}MLP-\\DMD\\(ours) \end{tabular} & \begin{tabular}[c]{@{}c@{}}MLP-\\SW-DMD\\(ours) \end{tabular} & \begin{tabular}[c]{@{}c@{}}CNN-\\DMD\\(ours) \end{tabular} \\
\midrule
\multirow{3}{*}{\begin{tabular}[c]{@{}l@{}}10-qubit\\ shot noise\end{tabular}} & 100 & 5.51x & \textbf{5.54x} & 3.25x & 3.23x & 1.99x \\
 & 1,000 & 3.20x & \textbf{4.36x} & 2.38x & \textbf{4.36x} & 2.54x \\
 & 10,000 & 1.59x & \textbf{3.49x} & 1.59x & 2.41x & 1.96x \\
\midrule
\multirow{3}{*}{\begin{tabular}[c]{@{}l@{}}5-qubit\\ shot noise\end{tabular}} & 100 & 1.50x & 2.64x & 1.83x & \textbf{4.57x} & 1.47x \\
 & 1,000 & 1.39x & \textbf{3.84x} & 1.45x & 2.25x & 1.95x \\
 & 10,000 & 1.49x & \textbf{3.43x} & 1.80x & 2.00x & 1.78x \\
\midrule
\multirow{3}{*}{\begin{tabular}[c]{@{}l@{}}5-qubit\\FakeLima\end{tabular}} & 100 & \textbf{2.44x} & 2.24x & 2.43x & 2.15x & 2.34x \\
 & 1,000 & 1.50x & \textbf{2.61x} & 2.07x & 2.41x & 1.84x \\
 & 10,000 & 1.48x & 2.32x & 1.93x & \textbf{2.56x} & 1.92x \\
\midrule
\multirow{3}{*}{\begin{tabular}[c]{@{}l@{}}5-qubit\\FakeManila\end{tabular}} & 100 & 2.14x & 2.51x & 2.54x & \textbf{2.91x} & 1.25x \\
 & 1,000 & 1.49x & 2.02x & 1.90x & \textbf{2.09x} & 1.89x \\
 & 10,000 & 1.95x & 2.13x & 2.13x & \textbf{2.27x} & 1.82x \\
    \bottomrule
  \end{tabular}
  \caption{Speedup ratios of our QuACK with all DMD methods for various noise systems. 
  } \label{tab:noise}
\end{minipage}
\hfill
\begin{minipage}{0.29\textwidth}
\includegraphics[width=\textwidth]{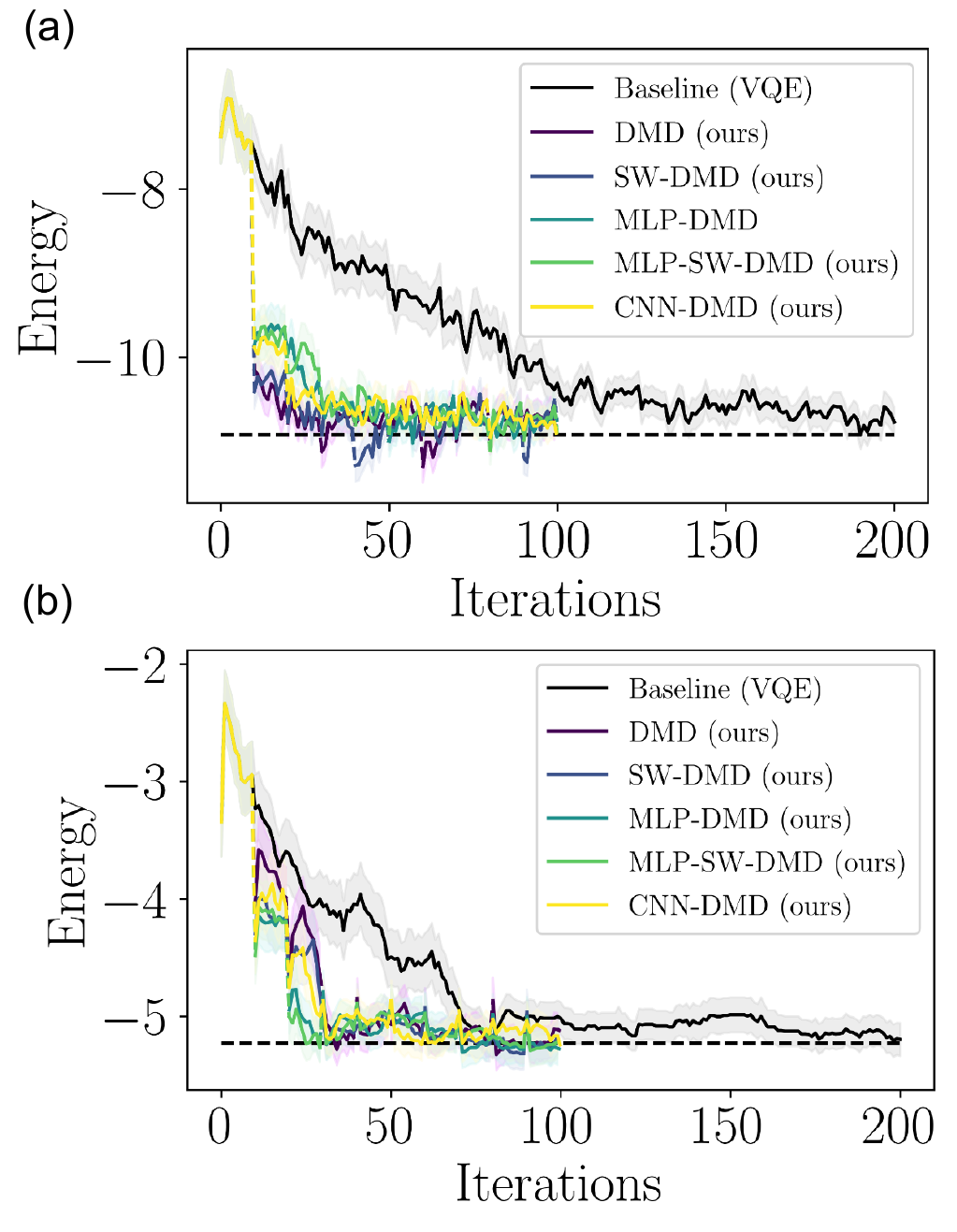}
\captionof{figure}{
  Noisy quantum optimization with $n_{\mathrm{shots}} = 100$. (a) 10-qubit shot noise system (b) 5-qubit FakeManila.
	}
 \label{fig:noise}
\end{minipage}
\end{table}
\subsection{More than 3x Speedup in Non-smooth Optimization by QuACK}\label{sec:non-smooth}

We further demonstrate speedup from QuACK in the non-smooth optimization regimes with learning rate 0.01 for examples of quantum Ising model, quantum chemistry, and quantum machine learning with performance shown in Figure~\ref{fig:main_results} with numerical speedups in Appendix. 
Only the gradient steps are plotted, but the computational cost of QuACK prediction steps are also counted when computing speedup.
All the 5 DMD methods are applied to all the examples. Our QuACK with all the DMD methods accelerates the VQA, with maximum speedups (\subref{subfig:lih}) LiH molecule: 4.63x (\subref{subfig:adam}) quantum Ising model: 3.24x (\subref{fig:qml}) QML: 4.38x. These applications are of broad interest across different fields and communities and show that our QuACK works for a range of loss functions and tasks.

\subsection{2x to 5.5x Speedup from Ablation the Robustness of QuACK to Noise}\label{sec:noise_main}

Near-term quantum computers are imperfect and have two types of noise: (1) shot noise from the probabilistic nature of quantum mechanics, which decreases as $1 / \sqrt{n_{\mathrm{shots}}}$, (2) quantum noise due to the qubit and gate errors which cannot be removed by increasing $n_{\mathrm{shots}}$. Therefore, we consider two categories of ablation studies (1) with shot noise only (2) with both shot noise and quantum noise. We apply our QuACK with all 5 DMD methods to 4 types of noise systems: 10-qubit and 5-qubit systems with only shot noise, FakeLima, and FakeManila. The latter two are noise models provided by Qiskit to mimic 5-qubit IBM real quantum computers, Lima and Manila, which contain not only shots noise but also the machine-specific quantum noise. We use the quantum Ising model with Adam and show generic dynamics in Figure~\ref{fig:noise} with statistical error from shots as error bands.
The fluctuation and error from baseline VQE in Figure~\ref{fig:noise}(a) 100-shot 10-qubit system are less than Figure~\ref{fig:noise}(b) 100-shot FakeManila.
Our QuACK works well in (a) up to 5.54x speedup and in (b) up to 2.44x speedup.
In all examples, we obtain speedup and show them in Table~\ref{tab:noise}, which demonstrate the robustness of our QuACK in realistic setups with generic noisy conditions. We have also implemented an experiment to accelerate VQE on the real IBM quantum computer Lima with results in Appendix.

\section{Conclusion} 
We developed QuACK, a novel algorithm that accelerates quantum optimization. We derived QuACK from connecting quantum natural gradient theory, overparameterization theory, and Koopman operator learning theory. We rigorously tested QuACK's performance, robustness, spectral stability, and complexity on both simulated and real quantum computers across a variety of scenarios. Notably, we observed orders of magnitude speedups with QuACK, achieving over 200 times faster in overparameterized regimes, 10 times in smooth regimes, and 3 times in non-smooth regimes. This research highlights the significant potential of Koopman operator theory for accelerating quantum optimization and lays the foundation for stronger connections between machine learning and quantum optimization. We elaborate more on the broader impact of our work in the Appendix.

\section*{Acknowledgement}
The authors acknowledge helpful discussions with Hao He, Charles Roques-Carmes, Eleanor Crane, Nathan Wiebe, Zhuo Chen, Ryan Levy, Lucas Slattery, Bryan Clark, Weikang Li, Xiuzhe Luo, Patrick Draper, Aida El-Khadra, Andrew Lytle, Yu Ding, Ruslan Shaydulin, Yue Sun. DL, RD and MS acknowledge support from the NSF AI Institute for Artificial Intelligence and Fundamental Interactions (IAIFI). DL is supported in part by the Co-Design Center for Quantum Advantage (C2QA). JS acknowledges support from the U.S. Department of Energy, Office of Science, Office of High Energy Physics QuantISED program under an award for the Fermilab Theory Consortium ``Intersections of QIS and Theoretical Particle Physics''. This material is also in part based upon work supported by the Air Force Office of Scientific Research under the award number FA9550-21-1-0317. We acknowledge the use of IBM Quantum services for this work. The views expressed are those of the authors, and do not reflect the official policy or position of IBM or the IBM Quantum team. In this paper we used \textit{ibmq\_lima}, which is one of the IBM Quantum Falcon Processors.

\bibliography{neurips_2023}

\newpage
\appendix

\onecolumn

{\textbf{\huge Appendix}}
\bigskip

The structure of our Appendix is as follows. 
Appendix~\ref{app:preliminaries} gives an introduction to quantum mechanics and the overparameterization theory as preliminaries of our paper.
Appendix~\ref{app:quack} provides more details of our QuACK framework introduced in Sec.~\ref{sec:quack} of main text.
Appendix~\ref{app:theory} provides proofs of our theoretical results in Sec.~\ref{sec:theory} of main text.
Appendix~\ref{app:exp} discusses detailed information on Sec.~\ref{sec:exp} Experiments of main text.
Appendix~\ref{app:additional_info} contains additional information.

\section{Preliminaries}\label{app:preliminaries}

\subsection{Introduction to Quantum Mechanics for Quantum Computation}

We introduce quantum mechanics for quantum computation in this section.
In quantum mechanics, the basic element to describe the status of a system is a quantum state (or a wave function) $\psi$. A pure quantum state is a vector in a Hilbert space. We can also use the Dirac notation for quantum states $\langle\psi|$ and $|\psi\rangle$ to denote the row and column vectors respectively. By convention, $\langle\psi|$ is the Hermitian conjugate (composition of complex conjugate and transpose) of $|\psi\rangle$.

In a quantum system with a single qubit, there are two orthonormal states $|0\rangle$ and $|1\rangle$, and a generic quantum state is spanned under this basis as $|\psi\rangle = c_0 |0\rangle + c_1 |1\rangle$ where $c_0, c_1 \in \mathbb{C}$. Therefore, $\psi$ itself is in a 2-dimensional Hilbert space as $\psi \in \mathbb{C}^2$.
For a quantum mechanical system with $N$ qubits, the basis of quantum state contains tensor products (Kronecker products) of the single-qubit bases, \textit{i.e.}, $| b_1 \rangle \otimes | b_2 \rangle \otimes \cdots \otimes | b_{N} \rangle$, where $b_i \in \{0, 1\}$ ($i = 0, 1, \cdots, N$). There are in total $2^N$ basis vectors, and $\psi$ is a linear combination of them with complex coefficents. Therefore, $\psi$ for an $N$-qubit system is in a $2^N$-dimensional complex-valued Hilbert space $\mathbb{C}^{2^N}$. For a normalized quantum state, we further require the constraint $\| \psi \|_2^2 = 1$ where $\| \cdot \|_2$ is the $l_2$-norm.

For the $N$-qubit system, the Hamiltonian $\mathcal{H}$ is a Hermitian $2^N \times 2^N$-matrix acting on the quantum state (wave function) $\psi$. The energy of $\psi$ is given by $\mathcal{L}(\psi) = \langle \psi | \mathcal{H} \psi \rangle$, which is the inner product between $\langle \psi |$ (a row vector, the Hermitian conjugate of $| \psi \rangle$) and $| \mathcal{H} \psi \rangle$ (matrix multiplication between the matrix $\mathcal{H}$ and the column vector $| \psi \rangle$).

Pauli matrices (including the identity matrix $I$ by our convention) are the $2\times2$ matrices
\begin{equation}
    I = \left[\begin{matrix} 1 & 0 \\
							0 & 1 \end{matrix}\right], \quad X = \left[\begin{matrix} 0 & 1 \\
							1 & 0 \end{matrix}\right]
	, \quad
    Y = \left[\begin{matrix} 0 & -i \\
							i & 0 \end{matrix}\right]
	, \quad
	Z = \left[\begin{matrix} 1 & 0 \\
							0 & -1 \end{matrix}\right]
							.
\end{equation}
The Pauli matrices can act on a single-qubit (2-dimensional) state through matrix multiplication. For the $N$-qubit system, we need to specify which qubit the Pauli matrix is acting on by the subscript $i$ for the $i$-th qubit. For example, $X_2$ denotes the $X$ Pauli matrix acting on the second qubit. To have a complete form of a $2^N \times 2^N$-matrix acting on the $N$-qubit system, we need tensor products of Pauli matrices, \textit{i.e.}, Pauli strings. For example, $I_1 \otimes Z_2 \otimes Z_3 \otimes I_4$ is a $2^4 \times 2^4$-matrix acting on a 4-qubit system. Conventionally, without ambiguity, when writing a Pauli string matrix, we can omit the identity matrices, and it will be equivalent to write the above Pauli string matrix as $Z_2 \otimes Z_3$ (which, as an example, appears in the Hamiltonian of the quantum Ising model: $\mathcal{H} = - \sum_{i = 1}^{N } Z_i \otimes Z_{i + 1} - h \sum_{i = 1}^{N} X_i$). Pauli string can serve as the basis of writing the Hamiltonian, which is manifested by format of the Hamiltonian used in quantum chemistry: $\mathcal{H} = \sum_{j=0}^{n_{P}} h_j P_j$ with $n_{P}$ polynomial in $N$, where $P_j$ are pauli strings, and $h_j$ are the associated real coefficients.

In quantum computation, we have quantum gates parametrized by $\theta$. A common example is the single-qubit rotational gates $R_X(\theta)$, $R_Y(\theta)$, $R_Z(\theta)$, $2\times 2$-matrices defined as
\begin{equation}
    R_X(\theta) = \exp \left( - i \theta X / 2\right) =  \begin{bmatrix} \cos  \left(\theta / 2\right) & -i \sin \left(\theta / 2\right) \\
							-i \sin \left(\theta / 2\right) &  \cos \left(\theta / 2\right) \end{bmatrix}
       ,
\end{equation}
\begin{equation}
    R_Y(\theta) = \exp \left( - i \theta Y / 2\right) =  \begin{bmatrix} \cos  \left(\theta / 2\right) & -\sin \left(\theta / 2\right) \\
							\sin \left(\theta / 2\right) &  \cos \left(\theta / 2\right) \end{bmatrix}
       ,
\end{equation}
\begin{equation}
    R_Z(\theta) = \exp \left( - i \theta Z / 2\right) =  \begin{bmatrix} \exp \left( - i \theta / 2\right) & 0 \\
							0 & \exp \left( i \theta / 2\right) \end{bmatrix}
       ,
\end{equation}
where $\theta \in \mathbb{R}$ is the rotational angle as the parameter of a rotational gate.
In a quantum circuit, there can be a number of rotational gates acting on different qubits, and the parameters $\theta$ in them can be chosen independently. All these components of $\theta$ then get combined into a vector $\boldsymbol{\theta} \in \mathbb{R}^{n_{\mathrm{params}}}$. Note that the space of $\boldsymbol{\theta}$ is different from the space of $\psi$.

The different qubits in the quantum circuit would still be disconnected with only single-qubit rotational gates. To connect the different qubits, the 2-qubit controlled gates, including controlled-$X$, controlled-$Y$, and controlled-$Z$, (with no parameter) are needed
\begin{equation}
    CX = \begin{bmatrix}
        1 & 0 & 0 & 0 \\
        0 & 0 & 0 & 1 \\
        0 & 0 & 1 & 0 \\
        0 & 1 & 0 & 0
    \end{bmatrix}
    ,
\end{equation}
\begin{equation}
    CY = \begin{bmatrix}
        1 & 0 & 0 & 0 \\
        0 & 0 & 0 & -i \\
        0 & 0 & 1 & 0 \\
        0 & i & 0 & 0
    \end{bmatrix}
    ,
\end{equation}
\begin{equation}
    CZ = \begin{bmatrix}
        1 & 0 & 0 & 0 \\
        0 & 1 & 0 & 0 \\
        0 & 0 & 1 & 0 \\
        0 & 0 & 0 & -1
    \end{bmatrix}
    .
\end{equation}
They are $2^2 \times 2^2$-matrices that act on 2 qubits $i$, $j$ out of the $N$ qubits.

\paragraph{RealAmplitudes ansatz.} The RealAmplitudes ansatz is a quantum circuit that can prepare a quantum state with real amplitudes. The implementation of RealAmplitudes in Qiskit is a parameterized $R_Y$-layer acting on all the $N$ qubits and then repetitions of an entanglement layer of $CX$ and a parameterized $R_Y$-layer. The number of reptitions is denoted as ``$\mathrm{reps}$''. The total number of entanglement layers (with no parameters) is $\mathrm{reps}$, and the total number of parameterized $R_Y$-layers is $\mathrm{reps}+1$. The total number of parameters is $n_{\mathrm{params}} = N (\mathrm{reps} + 1)$. The entanglement layers in our experiment are chosen as a circular configuration.

\paragraph{Hardware-efficient ansatz.} We use the hardware-efficient ansatz following the convention of~\citet{you2022convergence}. In each building block, there are an $R_X$-layer, an $R_Y$-layer, and the $CZ$-entanglement layer at odd links and then even links  (acting on the state in this described order). Then there are in total repetitions of ``$\mathrm{depth}$'' of such building blocks. In the end, there are $2 N \cdot \mathrm{depth}$ parameters.

\subsection{Overparameterization Theory}

Recent development on overparameterization theories has shown that VQA with gradient-based optimization has convergence guarantee~\citep{larocca2021theory,liu2023analytic,you2022convergence}. Since there are different assumptions behind different overparameterization theories, here we provide a brief introduction to the overparameterization theory in~\citet{you2022convergence}, and we refer the readers to the original paper for more details. The important insight of the overparameterization theory comes from that in large-$n_{\mathrm{params}}$ regime with proper learning rate, the dynamics of the wave function $\ket{\psibst(t)}$ from the parameterized quantum circuit is described by the following equation 
\begin{equation}
    \frac{d \ket{\psibst(t)}}{dt} = - [\mathcal{H}, \ket{\psibst(t)}\bra{\psibst(t)}] \ket{\psibst(t)} + \delta f (\ket{\psibst(t)})
    ,
\end{equation}
where $\mathcal{H}$ is the Hamiltonian in a VQE task, and $[\;,\;]$ is the commutator. The first term in the R.H.S. is the Riemannian gradient descent over the normalized wave function manifold with energy as the loss function and converges linearly to the ground state. This is also similar to the case of the quantum natural gradient. The second term in the R.H.S. is a small perturbation function $\delta f$ of $\ket{\psibst(t)}$ (see the detailed form in~\citet{you2022convergence}). It has been shown that in the overparameterization regime, the equation can still converge to the ground state under such perturbation. Similar to the argument in the quantum natural gradient case, the overparameterization theory provides an effective structure for Koopman learning with approximate linear dynamics.

\section{Details of Sec.~\ref{sec:quack} QuACK - Quantum-circuit Alternating Controlled Koopman Operator Learning}\label{app:quack}

\subsection{Futher details of Neural DMD}
\label{app:neural}

\begin{figure}[h!]
\begin{center}
\includegraphics[width=0.9\linewidth]{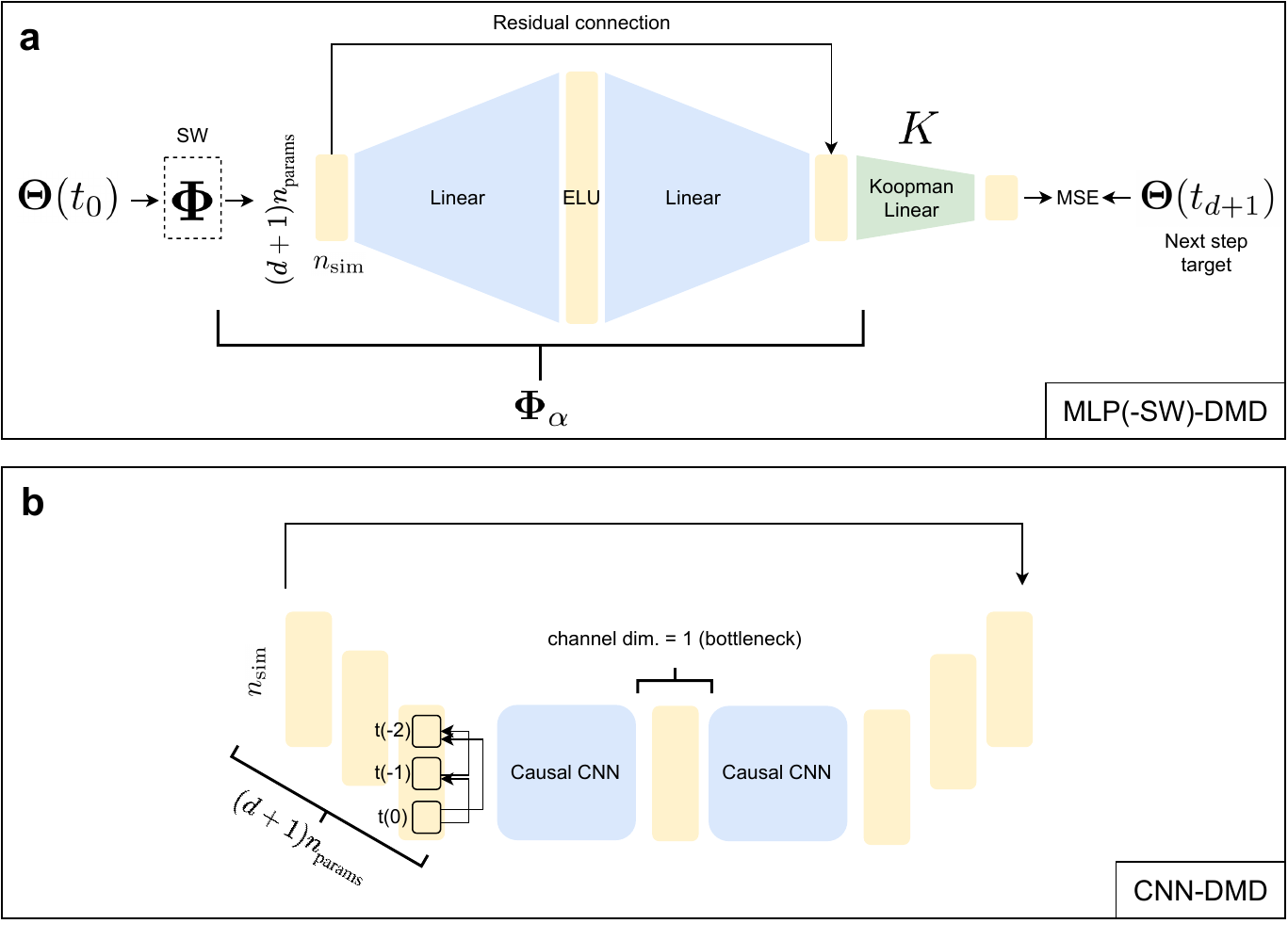}
\caption{Neural network architectures for our neural DMD approaches. The factor $(d+1)$ in the dimension $(d+1) n_{\mathrm{params}}$ is due to the sliding window embedding $\mathbf{\Phi}$. (a) MLP bottleneck architecture with MSE loss for training. (b) CNN bottleneck architecture that operates on simulations as temporal and parameters as channel dimensions.}
\label{fig:nn_arch}
\end{center}
\end{figure}
The neural network architectures for our neural DMD, \textit{i.e.}, MLP-DMD, MLP-SW-DMD, and CNN-DMD, are shown in Figure~\ref{fig:nn_arch}. For the CNN-DMD, we use an expansion ratio of 1 in the experiments. 

\paragraph{Optimization.} The parameters $K$ and $\alpha$ are trained jointly on $\argmin_{K,\alpha} \|\mathbf{\Theta}(t_{d+1})-K\mathbf{\Phi}_\alpha(\mathbf{\Theta}(t_0))\|_F,
$ from scratch with every new batch of optimization history by using the Adam optimizer for 30k steps with 9k steps of linear warmup from 0 to 0.001 and then cosine decay back to 0 at step 30k. We use the MSE loss, which minimizes the Frobenius norm.

\paragraph{Activation} Along with ELU, we also explored cosine, ReLU and tanh activations. We found ELU to be the best, likely because: tanh suffers from vanishing gradients, ReLU biases to positive numbers (while input phases quantum circuit parameters $\boldsymbol{\theta}$ are unconstrained) and cosine is periodic.

\paragraph{The importance of learning rate scheduler.} We train the neural network $\mathbf{\Phi}_\alpha$ from scratch every time we obtain optimization history as training data. It is important that we have a stable learning with well converging neural network at every single Koopman operator fitting stage. For that purpose, we found it vital to use a cosine-decay scheduler with a linear warmup, which is typically useful in the computer vision literature~\citep{loshchilov2016sgdr}. Namely, for the first 9k steps of the neural network optimization, we linearly scale the learning rate from 0 to 0.001, and then use a cosine-decay from 0.001 to 0 until the final step at 30k. In Sec.~\ref{sec:exp} for quantum machine learning, we use 160k training steps for CNN to achieve a better performance.

\paragraph{The importance of residual connections.} In our work we use a residual connection in order to make DMD as a special case of the neural DMD parameterization. The residual connection is indeed very useful, as it is driven by the Koopman operator learning formulation. Namely, the residual connection makes it possible for the encoder to learn the identity. If the encoder becomes the identity, then MLP(-SW)-DMD or CNN-DMD become vanilla (SW-)DMD.

\paragraph{The procedure for making predictions using neural DMD.} Sec.~\ref{sec:quack} provides the general objective $\argmin_{K,\alpha} \|\mathbf{\Theta}(t_{d+1})-K\mathbf{\Phi}_\alpha(\mathbf{\Theta}(t_0))\|_F$ for training the neural-network DMD including MLP-DMD, CNN-DMD, and MLP-SW-DMD. First, for MLP-DMD (and CNN-DMD with no sliding window), we take $d = 0$ with no sliding window, so the objective gets reduced to $\argmin_{K,\alpha} \|\mathbf{\Theta}(t_{1})-K\mathbf{\Phi}_\alpha(\mathbf{\Theta}(t_0))\|_F$. In the phase of training, the optimization history $\boldsymbol{\theta} (t_0), \boldsymbol{\theta} (t_1),..., \boldsymbol{\theta} (t_{m+1})$ is first concatenated into $\boldsymbol{\Theta} (t_0)$ and $\boldsymbol{\Theta} (t_1)$ in the same way as in SW-DMD using $\mathbf{\Theta}(t_k) = [\boldsymbol{\theta}(t_k) \ \boldsymbol{\theta}(t_{k + 1}) \cdots \boldsymbol{\theta}(t_{k + m})]$. In every column, $\boldsymbol{\Theta} (t_1)$ is one iteration ahead of $\boldsymbol{\Theta} (t_0)$ in the future direction. Then, in training, the operator $\mathbf{\Phi}_\alpha = \mathrm{NN}_\alpha$, as a neural network architecture, acts only on $\boldsymbol{\Theta} (t_0)$ not on $\boldsymbol{\Theta} (t_1)$. $K$ is a square Koopman matrix that denotes a forward dynamical evolution from $\mathbf{\Phi}_\alpha \boldsymbol{\Theta} (t_0)$ directly to $\boldsymbol{\Theta} (t_1)$, rather than from $\mathbf{\Phi}_\alpha \boldsymbol{\Theta} (t_0)$ to $\mathbf{\Phi}_\alpha \boldsymbol{\Theta} (t_1)$. Likewise, in the phase of inference for predicting the future of $\boldsymbol{\theta}$ beyond $t_{m+1}$, we apply the operator $K\mathbf{\Phi}_\alpha$ repeatedly using the evolution $\boldsymbol{\Theta} (t_{k+1}) = (K\mathbf{\Phi}_\alpha)^{k} \boldsymbol{\Theta} (t_{1})$, without an explicit inversion of the operator $\mathbf{\Phi}_\alpha$ to bring back the original representation. Next, for MLP-SW-DMD, we need to put $d$ back to the equations and make the operator $\mathbf{\Phi}_\alpha = \mathrm{NN}_\alpha \circ \mathbf{\Phi}$ a composition of the neural network architecture $\mathrm{NN}_\alpha$ and the sliding window embedding $\mathbf{\Phi}$. The Koopman operator $K$ of MLP-SW-DMD has the same dimension as $K$ of SW-DMD, which is non-square. The procedure of updating $\mathbf{\Phi} (\boldsymbol{\Theta} (t))$ by adding the latest data and removing the oldest data is the same as in SW-DMD described in Sec.~\ref{sec:quack}. The only difference between MLP-SW-DMD and SW-DMD is the additional neural network architecture $\mathrm{NN}_\alpha$ in MLP-SW-DMD. The only difference between MLP-SW-DMD and MLP-DMD is the additional sliding window embedding $\mathbf{\Phi}$ in MLP-SW-DMD. The same relationship holds true for CNN-DMD between the cases of $d = 0$ and $d > 0$.

\subsection{The Alternating Controlled Scheme}

\begin{figure}[!h]
	\centering
	\begin{subfigure}[t]{0.49\columnwidth}
		\centering
		\includegraphics[width=\textwidth]{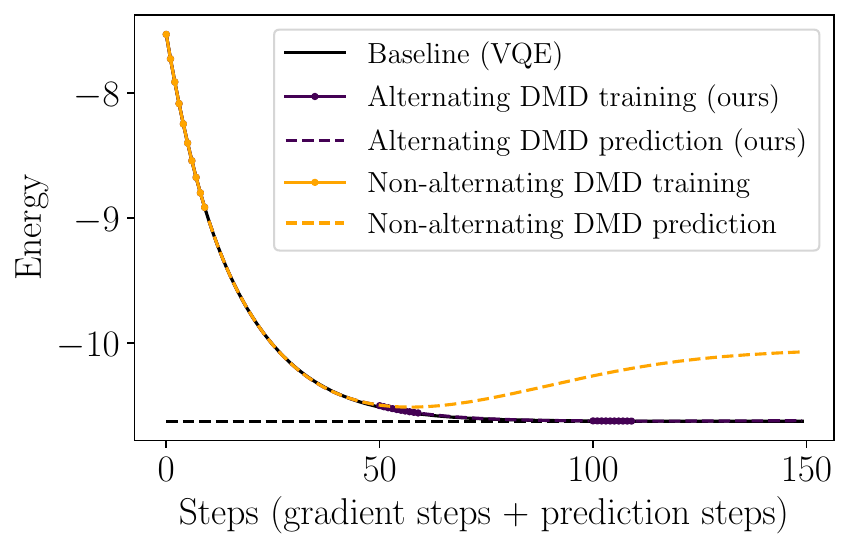}
		\caption{Quantum natural gradient results using alternating and non-alternating traditional DMD.
		}	\label{subfig:intro_nat_grad}
    \end{subfigure}
    \hfill
	\begin{subfigure}[t]{0.49\columnwidth}
    	\centering
    	\includegraphics[width=\textwidth]{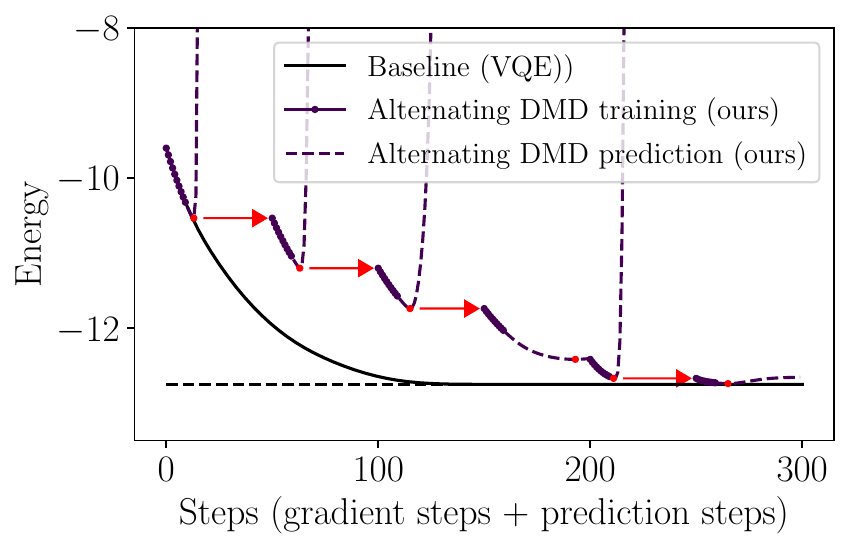}
    	\caption{Adam results using traditional DMD with the alternating scheme.
        }
    	\label{subfig:intro_adam}
    \end{subfigure}
	\caption{Motivating examples for the alternating scheme and nonlinear embeddings. The results are from the 10-qubit quantum Ising model. Solid and dashed parts indicate true gradient steps and prediction steps respectively. In the starting regime of (\subref{subfig:intro_nat_grad}) where the lines almost overlap, the first 10 steps are solid for the DMD lines. In (\subref{subfig:intro_adam}), the red points mark the optimal parameters for each piece of DMD predictions. The next piece of VQE starts from the optimal parameters rather than the last parameters, as is indicated by the red arrows, to guarantee the decrease of energy.}
	\label{fig:intro_motivation}
\end{figure}

The Koopman operator theory is a powerful framework for understanding and predicting nonlinear dynamics through linear dynamics embedded into a higher dimensional space. By viewing parameter optimization on quantum computers as a nonlinear dynamical evolution in the parameter space, we connect gradient dynamics in quantum optimization to the Koopman operator theory.
In particular, the quantum natural gradient helps to provide a effective structure of parameter embedding into a higher-dimensional space, related to imaginary-time evolution.
However, for quantum optimization, using the standard dynamic mode decomposition (DMD) that assumes linear dynamics in optimization directly has the following issues demonstrated in Figure~\ref{fig:intro_motivation} of the 10-qubit quantum Ising model with $h=0.5$ learning rate $0.01$.
(1) In Figure~\ref{subfig:intro_nat_grad}, the case of natural gradient, although the dynamics is approximately linear (as we discuss in main text), without using an alternating scheme, the DMD prediction is only good up to about 50 steps and starts to differ from the baseline VQE. This issue is much mitigated by adopting our alternating scheme by running follow-up true gradient steps.
(2) In Figure~\ref{subfig:intro_adam}, even with the alternating scheme, in the case of Adam where the dynamics is less linear, the standard DMD method can start to diverge quickly. To mitigate this issue, we control our alternating scheme by starting the follow-up true gradient steps from the optimal parameters in prediction. In addition, we propose to use the sliding-window embedding and neural-networks to better capture the nonlinearity of dynamics.

From Theorem~\ref{thm:asymp_dmd}, the asymptotic behavior of DMD prediction for quantum optimization is trivial or unstable. This is manifested in Figure~\ref{subfig:intro_adam}, as the energy of DMD prediction diverges as the prediction time increases.

\section{An Example Comparing Gradient-Based and Gradient-Free Methods\label{app:comparing_gradient-based_gradient-free}}

While gradient-based methods and gradient-free methods are both applicable types of optimizers for VQAs, gradient-based methods use gradient information and thus can know better about the local geometry~\cite{NEURIPS2018_a41b3bb3}, especially with quantum Fisher information through quantum natural gradient. Particularly, the gradient-based method has been shown to converge for quantum optimization~\cite{Sweke2020stochasticgradient,you2022convergence}.

\begin{figure}[t!]
\begin{center}
\includegraphics[width=0.5\linewidth]{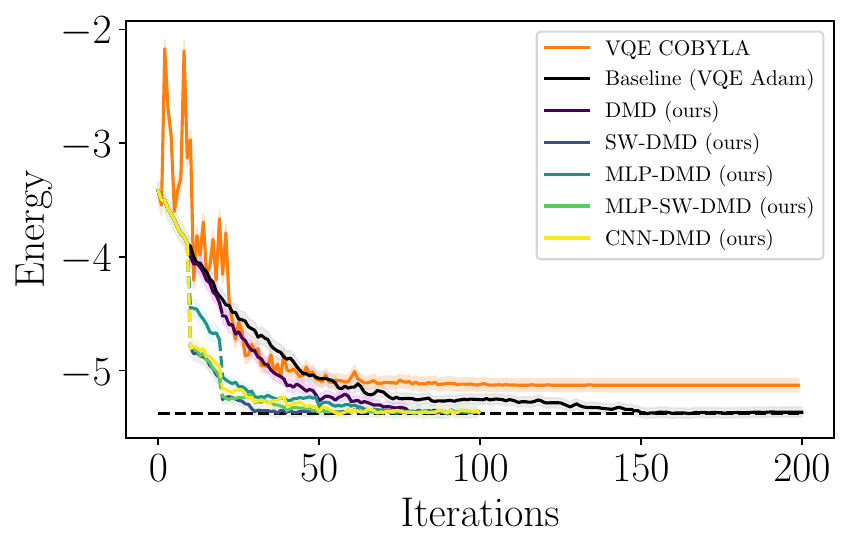}
\caption{5-qubit shot noise, $n_{\mathrm{shots}} = 1{,}000$. VQE with COBYLA is plotted in orange in addition to the baseline VQE with Adam and all the DMD methods in our QuACK. The COBYLA gets trapped.}
\label{fig:cobyla_comparison}
\end{center}
\end{figure}
In Figure~\ref{fig:cobyla_comparison}, we show an example of the quantum Ising model with shot noise. While the gradient-based VQE with Adam finds a lower energy than COBYLA with \texttt{rhobeg=1.0}, as COBYLA gets trapped. The results of Adam and QuACK are from Figure~\ref{subfig:5-qb_qasm_1000shots}. Our QuACK further improve upon Adam.

\section{Proofs of Theoretical Results in Sec.~\ref{sec:theory}}\label{app:theory}

\textbf{Corollary \ref{corollary:superior}.} \textit{
QuACK achieves superior performance over the asymptotic DMD prediction.
}
\begin{proof}
    As it is shown in Theorem~\ref{thm:asymp_dmd}, asymptotic DMD prediction could lead to trivial or unstable solution. Since QuACK employs the alternating controlled scheme, Theorem.~\ref{theorem:control} shows that QuACK obtains the optimal prediction in each iteration. It implies that even the asymptotic prediction converges to trivial solution or oscillates, QuACK can take the best prediction among the sufficiently long DMD predictions to achieve superior performance.
\end{proof}

\begin{corollary}
QuACK is guaranteed to converge for convex optimization on function $f: \mathbb{R}^m\rightarrow \mathbb{R}$ that is convex and differentiable with gradient that is
Lipschitz continuous of positive Lipschitz constant. 
\end{corollary}
\begin{proof}
    Since $f$ has gradient that is Lipschitz continuous with positive Lipschitz constant, there exists some $L > 0$ such that $||f'(x)- f'(y)||_2 \leq L ||x-y||_2$. It has been shown that by choosing a fixed step size $\eta = 1/L$, the gradient descent is guaranteed to converge~\citep{nesterov1998introductory}. Each gradient step update from $x_k$ is guaranteed to have $f(x_{k+1}) < f(x_k) - \frac{1}{2L}||f'(x_k)||^2$ and after $n$ gradient steps from $x^{(0)}$
    \begin{equation}
        f(x^{(n)}) - f(x^*) \leq \frac{||x^{(0)} - x^*||_2^{2}}{2n\eta}
    \end{equation}
    where $x^*$ is the optimal solution.
    Since Theorem~\ref{theorem:control} shows that QuACK yields an equivalent or lower loss than $n$-step gradient descent and we discuss above that the convex function $f$ with a positive Lipschitz constant can converge from any $x^{(0)}$ with a proper step size $\eta$, it follows that QuACK is guaranteed to converge in the same setup. 
\end{proof}

\textbf{Theorem \ref{theorem:speedup}.}
\textit{
With a baseline gradient-method, the speedup ratio $s$ is in the following range
\begin{equation}
\begin{aligned}
   a \leq s 
   \leq a \frac{f(p) (n_{\mathrm{sim}} + n_{\mathrm{DMD}})} {f(p) n_{\mathrm{sim}} +  n_{\mathrm{DMD}}}.
\end{aligned}
\label{eq:speedup_bounds}
\end{equation}
where
$a := T_b / T_{Q, t}$.
In the limit of $n_{\mathrm{iter}} \rightarrow \infty$
the upper bound can be achieved.
}

To achieve $\mathcal{L}_{\mathrm{target}}$, the baseline VQA takes $T_{{b}}$ gradient steps, and QuACK takes $T_{{Q, t}} = T_{{Q, 1}} + T_{{Q, 2}}$ steps,
where $T_{{Q, 1}}$ ($T_{{Q, 2}}$)
are the total numbers of $\mathrm{QuACK}$ training (prediction) steps.
$s$ is the speedup ratio from QuACK.
$p := n_{\mathrm{params}}$ is the number of parameters.

\begin{proof}
The total computational costs of the baseline and QuACK
both contain two parts: classical computational cost and quantum computational cost. For the baseline gradient-based optimizer, the total computational cost is $q \, n_{\mathrm{shots}} |B_q| f(p) T_b + c_{b}$, where $n_{\mathrm{shots}}$ is the number of quantum measurements per VQA example per gradient step per parameter, $|B_q|$ is the batch size of VQA examples (\textit{e.g.}, $|B_q|$ is the number of training examples in the batch of QML, and is the number of terms in the Hamiltonian in VQE), $q$ is the quantum computational cost per VQA example per shot per gradient step per parameter, and $c_{b}$ is the total classical computational cost for the baseline, from processing the input and output of the quantum computer using a classical computer.
$c_{b}$ depends on the details of the VQA task, and for the tasks feasible on near-term quantum computers, $c_{b} \ll q$ and thus can be neglected since the quantum resources are much more expensive than classical resources.
$f(p)$ is the optimizer factor for per gradient step evaluation on a quantum computer.
For example, we have $f(p) = 2p + 1$ for parameter-shift rules and $f(p) \sim p^2 + p$ for quantum natural gradients.
For QuACK, the total computational cost is $q \, n_{\mathrm{shots}} |B_q| [f(p) T_{Q,1} + T_{Q,2}] + c_{Q}$, 
where $c_{Q}$ the total classical computational cost for QuACK, depending on the type of DMD method and the number of classical training steps for neural DMD, plus classical processing of the input and output of the quantum computer. Again, for near-term quantum computers, $c_{Q} \ll q$ for the same reason as $c_{b}$ and thus can be neglected.
The QuACK training term has the factor $f(p)$, while the QuACK prediction term does not, because it does not involve taking gradients. The speedup ratio by QuACK is then
\begin{equation}
\begin{aligned}
   s
   = & \frac{q \, n_{\mathrm{shots}} |B_q| f(p) T_b + c_{b}}{q \, n_{\mathrm{shots}} |B_q| [f(p) T_{Q,1} + T_{Q,2}] + c_{Q}} \\
   \approx & \frac{q \, n_{\mathrm{shots}} |B_q| f(p) T_b}{q \, n_{\mathrm{shots}} |B_q| [f(p) T_{Q,1} + T_{Q,2}]} \\
   = & \frac{f(p) T_b}{f(p) T_{Q,1} + T_{Q,2}} \\
   = & a \frac{f(p) (T_{Q,1} + T_{Q,2})}{f(p) T_{Q,1} + T_{Q,2}}
\end{aligned}
\label{eq:speedup}
\end{equation}
When deriving Eq.~\ref{eq:speedup}, the common factor $q \, n_{\mathrm{shots}} |B_q|$ in 
the numerator and denominator
is exactly the same, so its cancellation is exact.
Since QuACK training always occurs before QuACK prediction, whether $\mathcal{L}_\mathrm{target}$ is achieved during QuACK training or prediction, the general case is
$T_{Q,1} / T_{Q,2} \geq n_{\mathrm{sim}} / n_{\mathrm{DMD}}$.
Therefore,
\begin{equation}
    a \leq s \leq a \frac{f(p) (n_{\mathrm{sim}} + n_{\mathrm{DMD}})} {f(p) n_{\mathrm{sim}} +  n_{\mathrm{DMD}}}
    .
\end{equation}
In the limit $n_{\mathrm{iter}} \rightarrow \infty$, the location (measured in terms of the number of QuACK steps) of the last QuACK step does not affect
$T_{Q,1} / T_{Q,2}$ much, and we have $T_{Q,1} / T_{Q,2} \rightarrow n_{\mathrm{sim}} / n_{\mathrm{DMD}}$,
so the upper bound in Eq.~\ref{eq:speedup_bounds} can be achieved.
\end{proof}

\section{Detailed Information on Sec.~\ref{sec:exp} Experiments}\label{app:exp}

\subsection{Experimental Setup Details}

We use the notation $T_{b,t}$ as the total number of pure VQA (baseline) gradient steps in the plots, and $n_{\mathrm{SW}}$ as the sliding-window size (equal to $d + 1$). $n_{\mathrm{iter}}$ is the number of repetitions of alternating VQA+DMD runs, which is chosen large enough to ensure convergence of the baseline VQA. When convergence is achieved, the performance of QuACK does not have a strong dependence on choice of $n_{\mathrm{iter}}$.

\subsubsection{Quantum Ising Model Setup}

For the quantum Ising model Hamiltonian $\mathcal{H} = - \sum_{i = 1}^{N} Z_i \otimes Z_{i + 1} - h \sum_{i = 1}^{N} X_i$, we use the periodic boundary condition, such that $i = 1$ and $i = N + 1$ are identified.

In Sec.~\ref{sec:qng} Quantum Natural Gradient, we use the circular-entanglement RealAmplitudes ansatz and reps=1 (2 layers, $2N$ parameters) with $T_{b,t} = 800$, quantum natural gradient optimizer, learning rate 0.001. For the QuACK hyperparameters, we choose $n_{\mathrm{sim}} = 4$ and $n_{\mathrm{DMD}} = 100$ with $n_{\mathrm{iter}} = 8$. The random sampling of initialization of $\boldsymbol{\theta}$ is from a uniform distribution in $[0, 1)^{n_{\mathrm{params}}}$.

In Sec.~\ref{sec:op}, we use the hardware-efficient ansatz with 250 layers ($500 N $ parameters) with $T_{b,t} = 5000 N$, gradient descent optimizer, learning rate 5e-6. For the QuACK hyperparameters, we choose $n_{\mathrm{sim}} = 4$ and $n_{\mathrm{DMD}} = 1000$ with $n_{\mathrm{iter}} = 5 N$ to ensure convergence.

In Sec.~\ref{sec:smooth}, we use the circular-entanglement RealAmplitudes ansatz and reps=1 (2 layers, $2N$ parameters) with $T_{b,t} = 1500$, Adam optimizer, learning rate 2e-3.
For the QuACK hyperparameters, we choose $n_{\mathrm{sim}} = 3$ and $n_{\mathrm{DMD}} = 60$ with $n_{\mathrm{iter}} = \lceil 4 T_{b,t} / (n_{\mathrm{sim}} + n_{\mathrm{DMD}}) \rceil = 96$ to ensure convergence. Some examples have an earlier stop, which does not affect calculation of the speedup ratio since they has already achieve $\mathcal{L}_{\mathrm{target}}$ before $96$ pieces of VQE+DMD.

In Sec.~\ref{sec:non-smooth}, we use the circular-entanglement RealAmplitudes ansatz for the 12-qubit quantum Ising model and reps=1 (2 layers, $2N = 24$ parameters) with $T_{b,t} = 300$ learning rate 0.01.
For the QuACK hyperparameters, we choose $n_{\mathrm{sim}} = 5$ and $n_{\mathrm{DMD}} = 40$ with $n_{\mathrm{iter}} = 12$ to ensure convergence.
We have the sliding-window size $n_{\mathrm{SW}} = 3$ for SW-DMD, MLP-SW-DMD, and CNN-DMD.

\subsubsection{Quantum Chemistry Setup}

A molecule of LiH, depicted in Figure~\ref{subfig:lih}, consists of a lithium atom and a hydrogen atom separated by a distance. We use the quantum chemistry module from Pennylane~\citep{bergholm2018pennylane} to obtain the 10-qubit Hamiltonian of LiH at an interatmoic separation $2.0~\mathrm{\AA}$ with 5 active orbitals. We use the circular-entanglement RealAmplitudes ansatz  and reps=1 (2 layers, 20 parameters).
The optimizer is Adam with the learning rate $0.01$. We perform $T_{b,t} = 1000$ baseline VQE iterations.
We choose $n_{\mathrm{sim}} = 5$, $n_{\mathrm{DMD}} = 40$, $n_{\mathrm{iter}} = 12$. The sliding-window size is $n_{\mathrm{SW}} = 3$ for SW-DMD, MLP-SW-DMD, and CNN-DMD.

\subsubsection{QML Architecture and Training Details\label{app:qml}}

\begin{figure}[h!]
\centering
    	\centering
    	\includegraphics[width=0.5\linewidth]{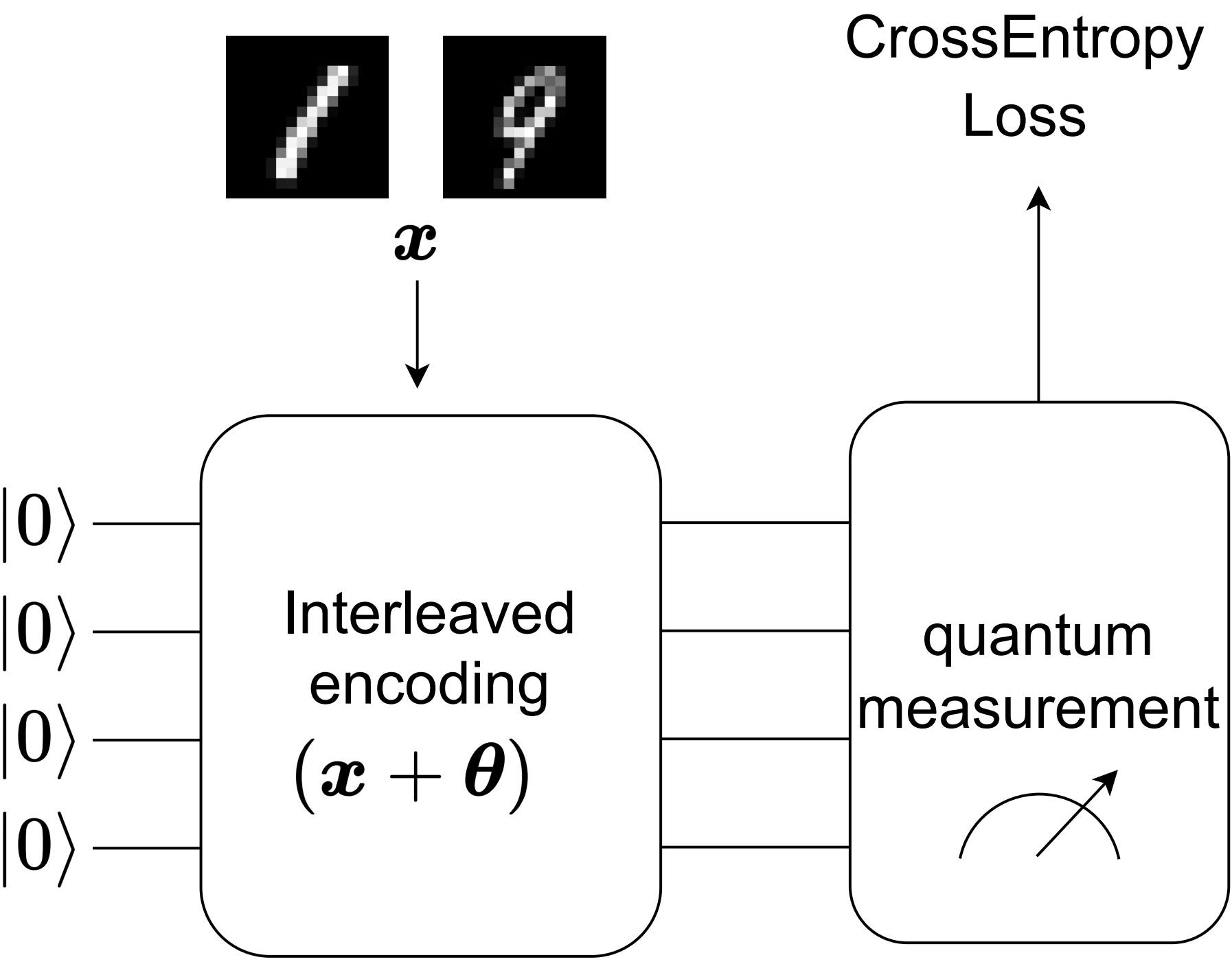}
    	\caption{Quantum machine learning architecture with interleaved encoding of 10-qubit quantum circuit for a binary classification task on MNIST.
    	}
\label{fig:qml_arch}
\end{figure}

In our QML example in Sec.~\ref{sec:exp}, the quantum computer has $N = 10$ qubits.
The architecture is shown in Figure~\ref{fig:qml_arch}.
For the MNIST binary classification task, each image is first downsampled to $16 \times 16$ pixels, and then used as an input $\boldsymbol{x} \in [0, 1]^{256}$ that is fed into an interleaved encoding quantum gate,. The parameters $\boldsymbol{\theta}$ are also encoded in the interleaved encoding quantum gate. Then the quantum measurements are used as the output for computing the cross-entropy loss.

The interleaved encoding gate consists of 9 layers. Each layer has a rotational layer and a linear entanglement layer.
On the
rotational layer, each qubit has three rotational gates $R_X$, $R_Z$, $R_X$ in a sequence.
Therefore, each layer has 30 rotational angles, and the whole QML architecture has 270 rotational angles, \textit{i.e.}, $n_{\mathrm{params}} = 270$.
Since each input example $\boldsymbol{x}$ is 256-dimensional, we only use the first 256 rotational angles to encode the input data. The parameters $\boldsymbol{\theta}$ are also encoded in the rotational angles such that the angles are $\boldsymbol{x} + \boldsymbol{\theta}$.

Quantum measurements, as the last part of the quantum circuit, map the quantum output to the classical probability data. In each quantum measurement, each qubit is in the 0-state or the 1-state, and the probability for the qubit to be in the 0-state is between 0 and 1. We regard this probability as the probablity for the image to be a digit ``1''. We only use the probability on the 5th qubit ($i = 5$) as the output and compute the cross-entropy loss with the labels.

In the phase of training, all the training examples share the same $\boldsymbol{\theta}$ but have different $\boldsymbol{x}$, and we optimize the final average loss with respect to $\boldsymbol{\theta}$ as
\begin{equation}
    \boldsymbol{\theta}^* = \argmin_{\boldsymbol{\theta}} \frac{1}{n_{\mathrm{train}}} \sum_{i = 1}^{n_{\mathrm{train}}} \mathcal{L} (\boldsymbol{x}^{{\mathrm{train}}}_i; \boldsymbol{\theta})
    ,
\end{equation}
in the case of $n_{\mathrm{train}}$ training examples.
The combination $\boldsymbol{x}^{{\mathrm{train}}}_i + \boldsymbol{\theta}$ is entered into the quantum circuit rotational angles, and we need to build $n_{\mathrm{train}}$ separate quantum circuits (which can be built either sequentially or in parallel).
The interleaved encoding gate as a whole can be viewed as a deep layer, and serve dual roles of encoding the classical data $\boldsymbol{x}$ and containing the machine learning parameters $\boldsymbol{\theta}$.
In the phase of inference, with the optimal parameter $\boldsymbol{\theta}^*$, we feed each test example $\boldsymbol{x}^{{\mathrm{test}}}_i$ into the quantum circuit as a combination $\boldsymbol{x}^{{\mathrm{test}}}_i + \boldsymbol{\theta}^*$ and measure the output probability to compute the test accuracy.

We use $500$ training examples and $500$ test examples. During training, we use the stochastic gradient descent optimizer with the batch size $50$ and learning rate $0.05$. The full QML training has $T_{b,t} = 400$ iterations. We choose $n_{\mathrm{sim}} = 10$, $n_{\mathrm{DMD}} = 20$, and $n_{\mathrm{SW}} = 6$ for SW-DMD and MLP-SW-DMD.

In neural DMD including MLP-DMD, MLP-SW-DMD, CNN-DMD, we use layerwise partitioning that groups $\boldsymbol{\theta}$ by layers in the encoding gate. There are 9 groups with group size 30. We perform neural DMD for each group, so that the number of parameters in the neural networks for DMD is not too large. After predicting each group separately, we combine all groups of $\boldsymbol{\theta}$ to evaluate the loss and accuracy. In CNN-DMD for QML, we use 160k steps for sufficient CNN training. For the stability of CNN-DMD in presence of the layerwise partitioning, we choose $n_{\mathrm{SW}} = 1$ for CNN-DMD.

\subsection{Wallclock Time in Classical Simulations}

\begin{table}[!htbp]
  \centering
  \begin{tabular}{cc}
    \toprule
 Method & Wallclock time (seconds) \\
\midrule
 VQE (baseline) & 2866  \\
 DMD (ours) & 646 \\
 SW-DMD (ours) & 663 \\
 MLP-DMD (ours) & 818 \\
 SW-MLP-DMD (ours) & 849 \\
 CNN-DMD (ours) & 828 \\
    \bottomrule
  \end{tabular}
  \caption{Wallclock time costs of simulations with the parameter-shift rule explicitly implemented on a classical computer with a single CPU. Our various QuACK methods are faster than the baseline because they use fewer gradient steps.
  } \label{tab:wallclock_time}
\end{table}
We define the wallclock time as the time of DMD learning plus the time of quantum optimization. The DMD learning on classical computers is indeed very efficient (for each piece, only milliseconds for DMD and SW-DMD, and less than 1 minute for neural DMD), and thus it is not the bottleneck. The main bottleneck is the time of quantum optimization, and our QuACK is designed to speed up this.
In our work, the quantum optimization is simulated by classical computers. The simulations on a classical computer (a single CPU) take the time at orders from minutes to an hour depending on the details of the task and hyperparameters. As a representative example, for the 12-qubit Ising model with Adam in Figure~\ref{subfig:adam} of Sec.~\ref{sec:non-smooth}, the time costs of simulations (with the parameter-shift gradient updates explicitly to mimic the actual quantum optimization) are listed in Table~\ref{tab:wallclock_time}.
Thanks to the acceleration by our QuACK, we can see the benefit of having less wallclock time by our various DMD methods compared to the baseline VQE. We note that these wallclock times are for classical simulations of quantum optimization. For actual quantum optimization, it is expected that our approach can gain much more benefit, since the dominant cost is from each gradient step of quantum optimization. The wallclock time saved on an actual quantum computer will be proportional to the speedup factor multiplying the actual time of each iteration in quantum optimization. This will help us to save more resources since the actual quantum optimization time is more expensive than the classical simulations of the quantum optimization.

\subsection{Speedup in Sec.~\ref{sec:non-smooth}}

\begin{table}[!htbp]
  \centering
  \begin{tabular}{lccccc}
    \toprule
 & \multicolumn{5}{c}{Speedup} \\
\cmidrule(r){2-6}
Task & DMD & SW-DMD & MLP-DMD & \begin{tabular}[c]{@{}c@{}}MLP-\\SW-DMD \end{tabular} & CNN-DMD \\
\midrule
12-qubit Ising & 3.43x & \textbf{4.63x} & 2.04x & 3.43x & 2.32x \\
10-qubit LiH & 2.31x & \textbf{3.24x} & 1.82x & 2.42x & 1.16x \\
10-qubit QML & 3.76x & \textbf{4.38x} & 1.86x & 4.11x & 3.46x \\
    \bottomrule
  \end{tabular}
  \caption{Speedup ratios for all DMD methods for various tasks under the condition of non-smooth optimization .
  } \label{tab:nonsmooth}
\end{table}
The speedup ratios in Sec.~\ref{sec:non-smooth} for non-smooth optimization by QuACK from our QuACK with all the DMD methods are given in Table~\ref{tab:nonsmooth}.

\subsection{Dynamics Prediction in the Quantum Natural Gradient, Overparameterization, and Smooth Optimizatoin Regimes}

We show the dynamics prediction of the DMD prediction in our QuACK framework compared with the intrinsic dynamics of the baseline VQE in the cases of the quantum natural gradient, overparameterization, and smooth optimization regimes, as a supplement of the discussion in Sec.~\ref{sec:qng},~\ref{sec:op},~\ref{sec:smooth}.

\begin{figure}[!htbp]
\centering
    	\centering
    	\includegraphics[width=0.5\linewidth]{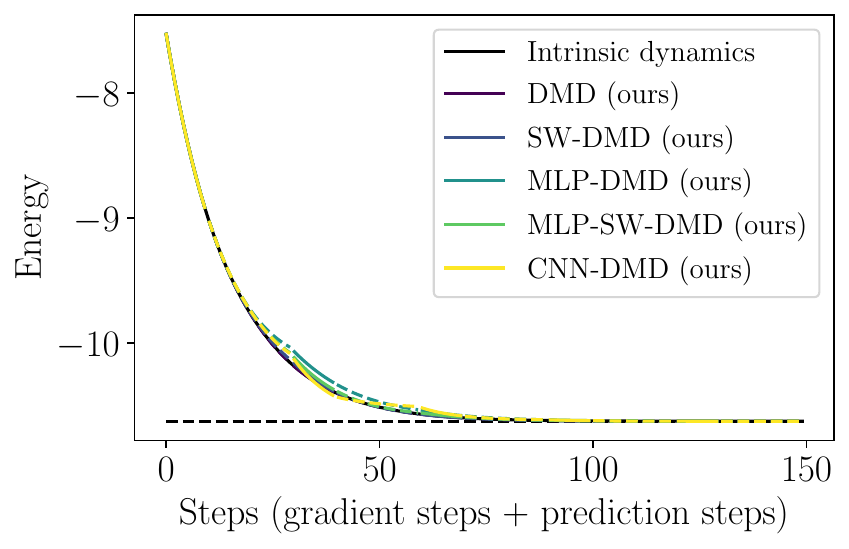}
    	\caption{Quantum natural gradient  for VQE of 10-qubit quantum Ising model. All the DMD methods using our QuACK framework provide accurate prediction for the intrinsic dynamics of energy of the VQE. For the various DMD methods, solid parts are VQE runs, and the dashed parts are DMD predictions.
    	}
\label{fig:qng_10qubit}
\end{figure}
\begin{figure}[!htbp]
\centering
	\begin{subfigure}[ht]{0.48\textwidth}
		\centering
		\includegraphics[width=\textwidth]{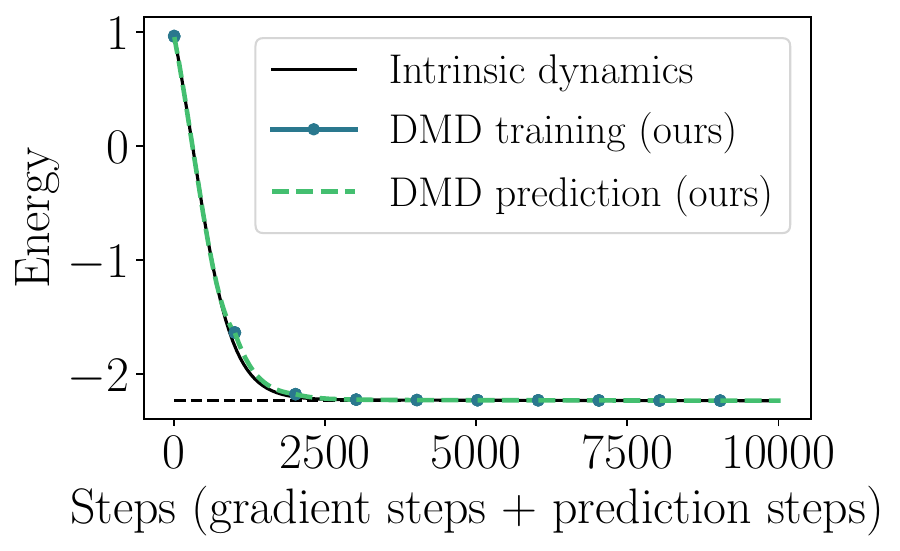}
		\caption{$N=2$, $n_{\mathrm{params}} = 1000$.}
		\label{subfig:2-qb_op}
    \end{subfigure}
    \hfill
	\begin{subfigure}[ht]{0.48\textwidth}
		\centering
		\includegraphics[width=\textwidth]{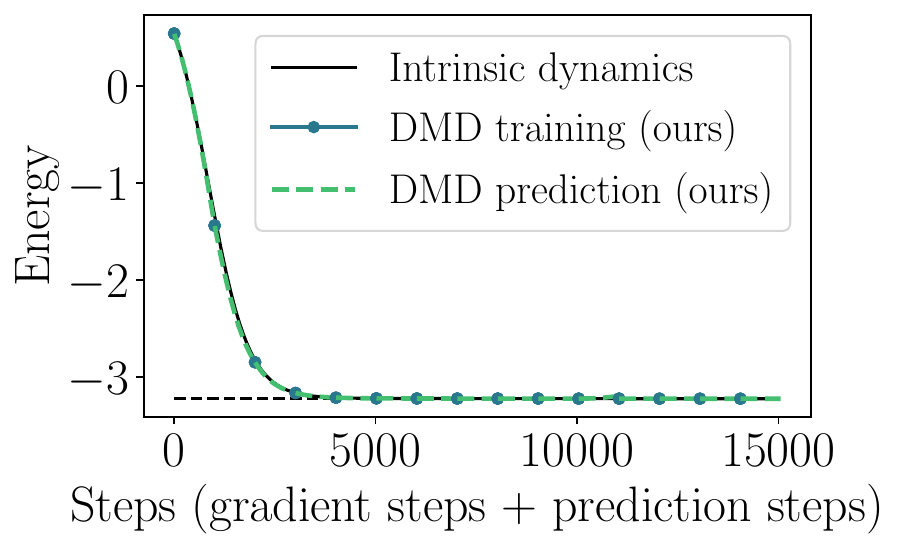}
		\caption{$N=3$, $n_{\mathrm{params}} = 1500$.}
		\label{subfig:3-qb_op}
    \end{subfigure}
    \begin{subfigure}[ht]{0.48\textwidth}
		\centering
		\includegraphics[width=\textwidth]{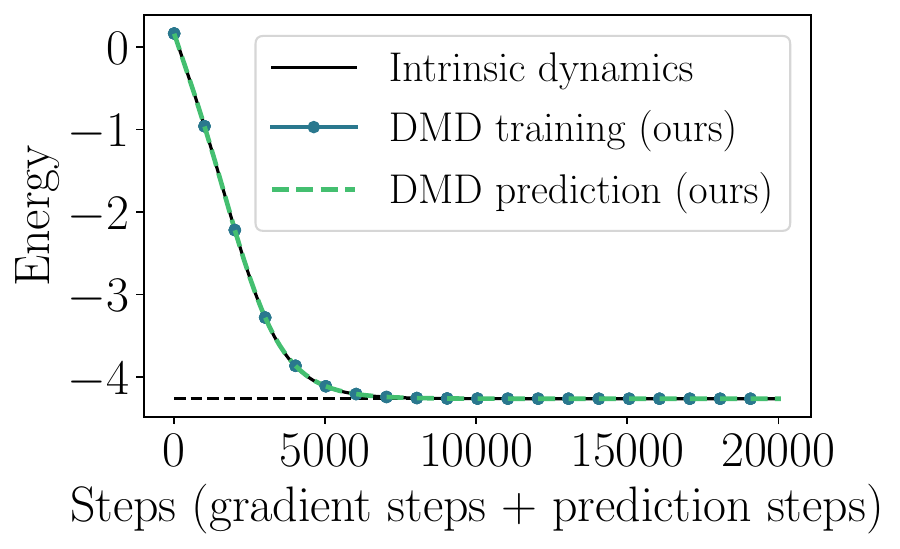}
		\caption{$N=4$, $n_{\mathrm{params}} = 2000$.}
		\label{subfig:4-qb_op}
    \end{subfigure}
    \begin{subfigure}[ht]{0.48\textwidth}
		\centering
		\includegraphics[width=\textwidth]{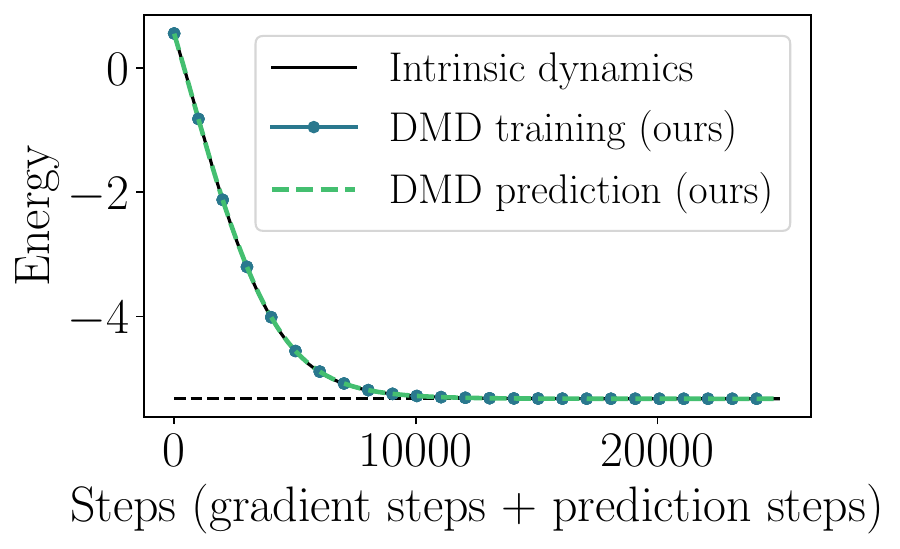}
		\caption{$N=5$, $n_{\mathrm{params}} = 2500$.}
		\label{subfig:5-qb_op}
    \end{subfigure}
	\caption{
 Prediction from DMD in our QuACK framework matches the intrinsic dynamics of energy in the overparameterization regime for VQE of the $N$-qubit quantum Ising model.}
	\label{fig:op_prediction}
\end{figure}

\paragraph{Quantum natural gradient.}
As a supplement of Sec.~\ref{sec:qng}, in Figure~\ref{fig:qng_10qubit}, we show the prediction matching of the 10-qubit quantum Ising model (RealAmplitidues ansatz with $\mathrm{reps} = 1$) with transverse field $h=0.5$ using the quantum natural gradient with learning rate 0.01. the prediction from all the DMD methods in our QuACK framework is accurate, as they almost overlap with the intrinsic dynamics of energy from the baseline VQE ($T_{b,t} = 150$). We use $n_{\mathrm{sim}} = 10$, $n_{\mathrm{DMD}} = 20$ with $n_{\mathrm{iter}} = 5$. We choose $n_{\mathrm{SW}} = 6$ for SW-DMD, MLP-SW-DMD, and CNN-DMD. For the neural DMD, to achieve good performance, we use 160k training steps for MLP-DMD, and 80k training steps for MLP-SW-DMD and CNN-DMD.

\paragraph{Overparameterization Regime.} In Figure~\ref{fig:op_prediction}, we present examples used in Sec.~\ref{sec:op}. The DMD prediction prediction matches the intrinsic dynamics of VQE, with short training ($n_{\mathrm{sim}} = 4$) and long prediction ($n_{\mathrm{DMD}} = 1000$). Therefore, a great effect of >200x speedup can be achieved.

\begin{figure}[!htbp]
\centering
	\begin{subfigure}[ht]{0.32\textwidth}
		\centering
		\includegraphics[width=\textwidth]{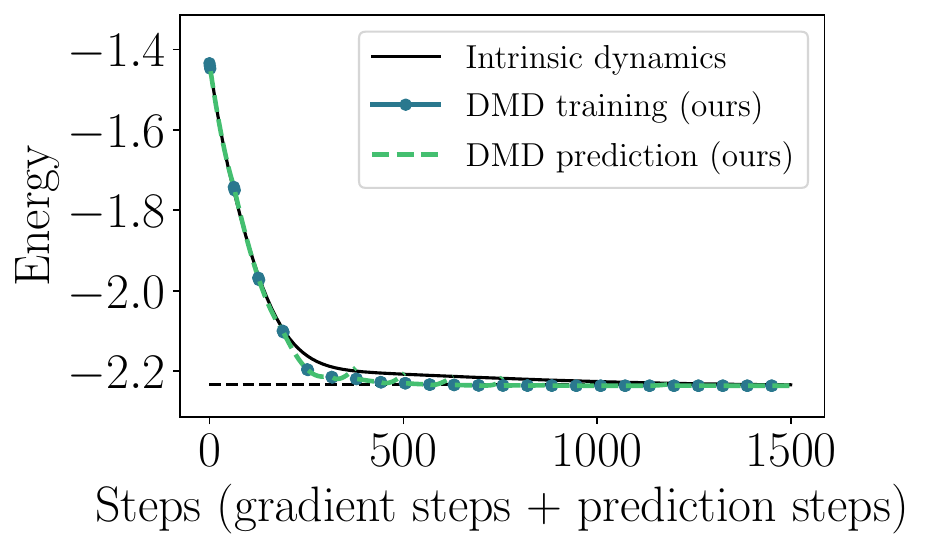}
		\caption{$N=2$, $n_{\mathrm{params}} = 4$.}
		\label{subfig:2-qb_smooth}
    \end{subfigure}
    \hfill
    \begin{subfigure}[ht]{0.32\textwidth}
		\centering
		\includegraphics[width=\textwidth]{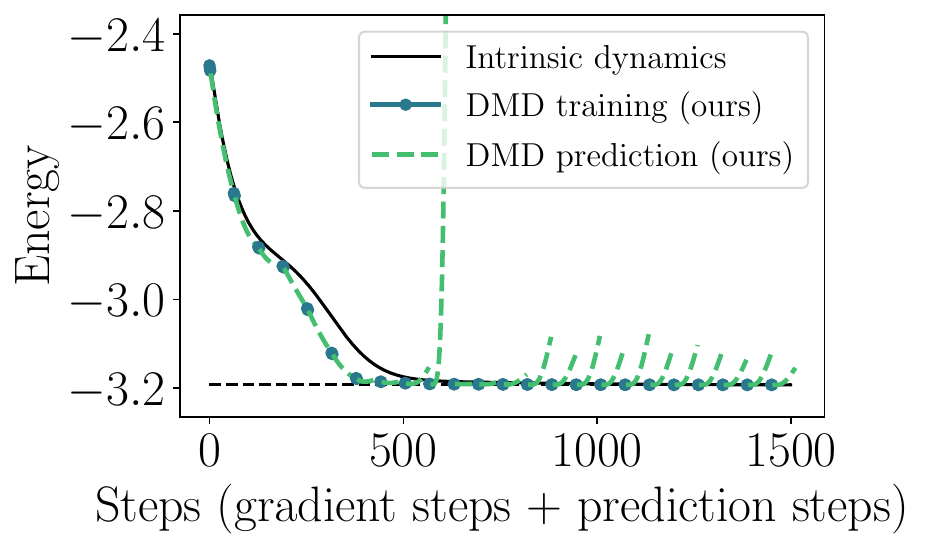}
		\caption{$N=3$, $n_{\mathrm{params}} = 6$.}
		\label{subfig:3-qb_smooth}
    \end{subfigure}
    \hfill
    \begin{subfigure}[ht]{0.32\textwidth}
		\centering
		\includegraphics[width=\textwidth]{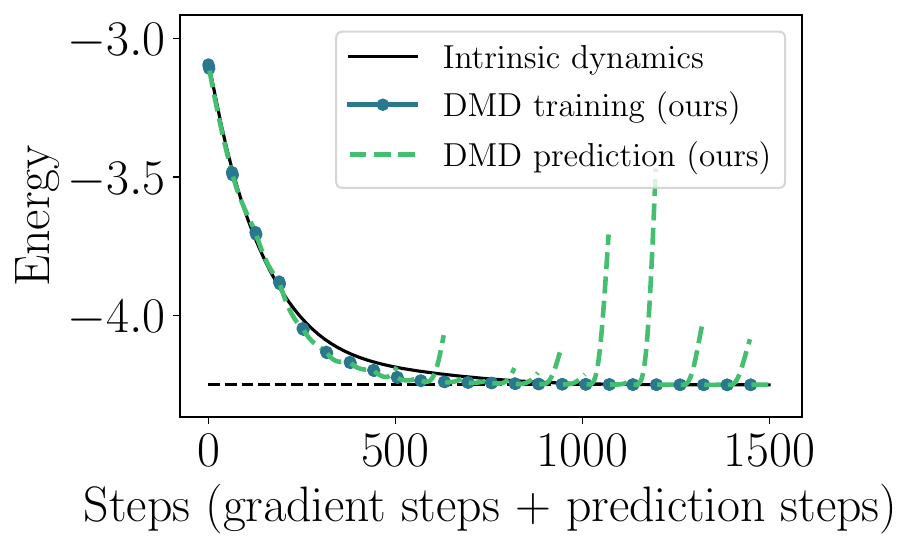}
		\caption{$N=4$, $n_{\mathrm{params}} = 8$.}
		\label{subfig:4-qb_smooth}
    \end{subfigure}
    \begin{subfigure}[ht]{0.32\textwidth}
		\centering
		\includegraphics[width=\textwidth]{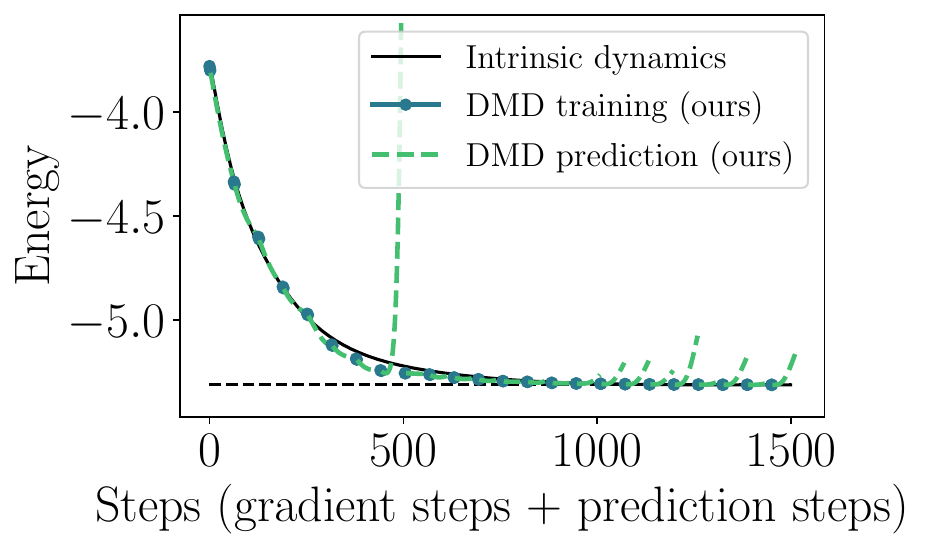}
		\caption{$N=5$, $n_{\mathrm{params}} = 10$.}
		\label{subfig:5-qb_smooth}
    \end{subfigure}
    \hfill
    \begin{subfigure}[ht]{0.32\textwidth}
		\centering
		\includegraphics[width=\textwidth]{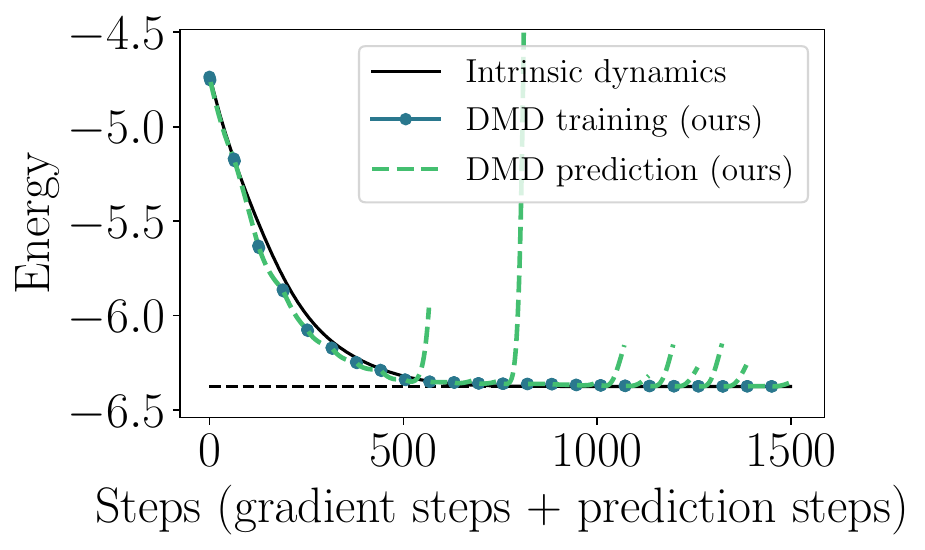}
		\caption{$N=6$, $n_{\mathrm{params}} = 12$.}
		\label{subfig:6-qb_smooth}
    \end{subfigure}
    \hfill
    \begin{subfigure}[ht]{0.32\textwidth}
		\centering
		\includegraphics[width=\textwidth]{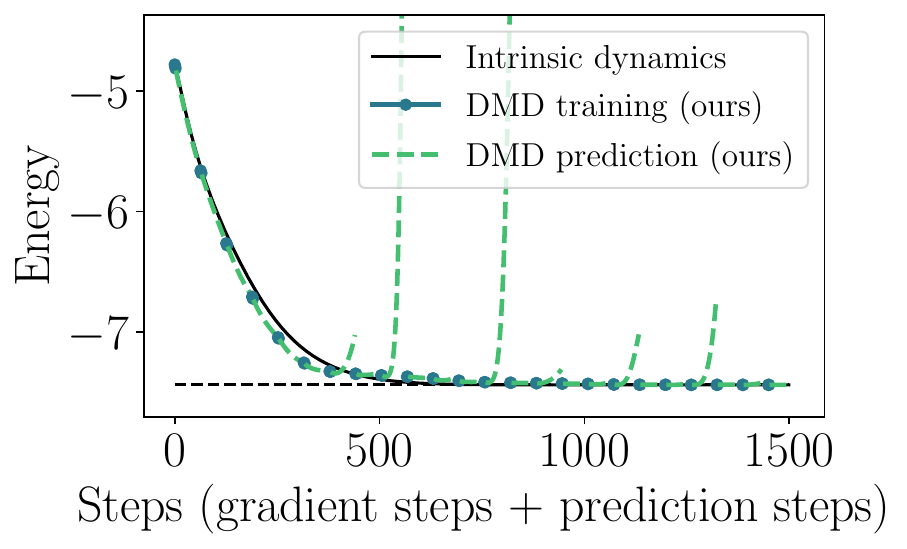}
		\caption{$N=7$, $n_{\mathrm{params}} = 14$.}
		\label{subfig:7-qb_smooth}
    \end{subfigure}
    \begin{subfigure}[ht]{0.32\textwidth}
		\centering
		\includegraphics[width=\textwidth]{figs/smooth/op_n_qubits_5_seed_400.pdf}
		\caption{$N=8$, $n_{\mathrm{params}} = 16$.}
		\label{subfig:8-qb_smooth}
    \end{subfigure}
    \hfill
    \begin{subfigure}[ht]{0.32\textwidth}
		\centering
		\includegraphics[width=\textwidth]{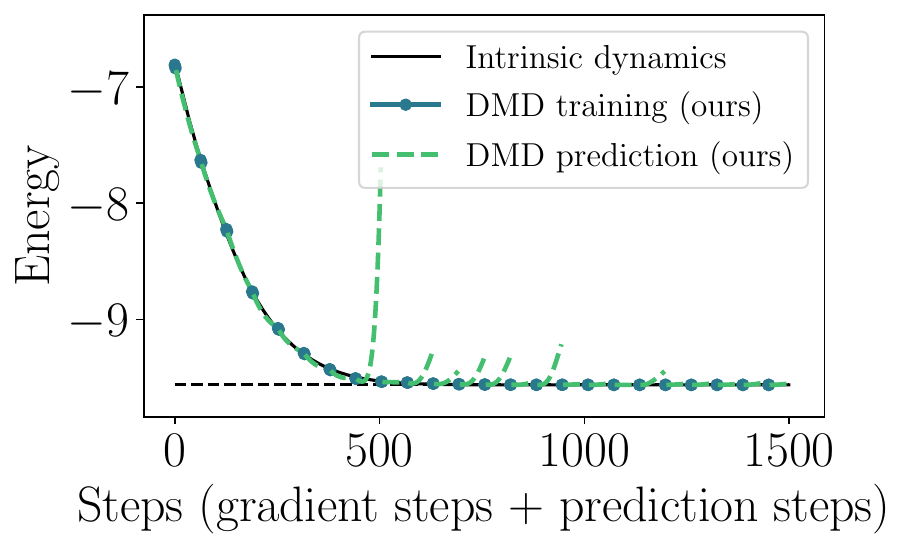}
		\caption{$N=9$, $n_{\mathrm{params}} = 18$.}
		\label{subfig:9-qb_smooth}
    \end{subfigure}
    \hfill
    \begin{subfigure}[ht]{0.32\textwidth}
		\centering
		\includegraphics[width=\textwidth]{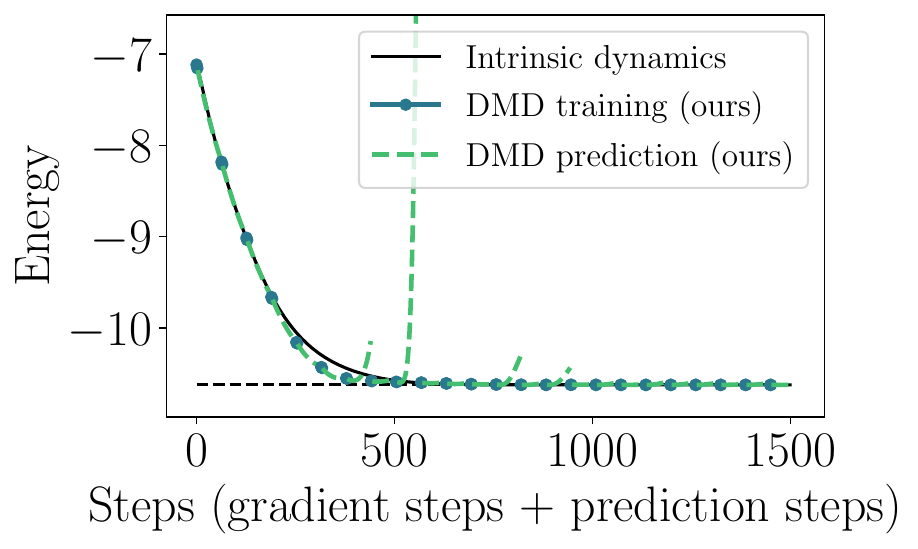}
		\caption{$N=10$, $n_{\mathrm{params}} = 20$.}
		\label{subfig:10-qb_smooth}
    \end{subfigure}
    \begin{subfigure}[ht]{0.32\textwidth}
		\centering
		\includegraphics[width=\textwidth]{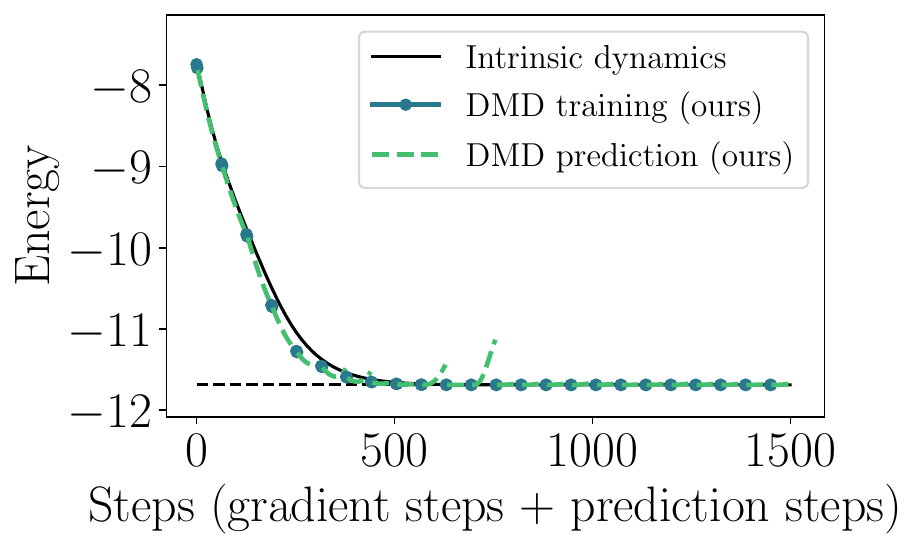}
		\caption{$N=11$, $n_{\mathrm{params}} = 22$.}
		\label{subfig:11-qb_smooth}
    \end{subfigure}
    \hfill
    \begin{subfigure}[ht]{0.32\textwidth}
		\centering
		\includegraphics[width=\textwidth]{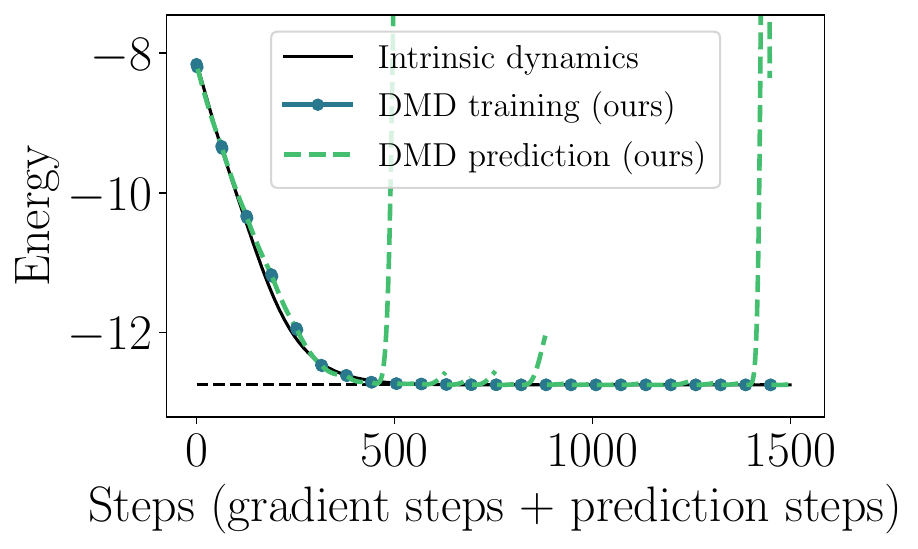}
		\caption{$N=12$, $n_{\mathrm{params}} = 24$.}
		\label{subfig:12-qb_smooth}
    \end{subfigure}
    \hfill
    \hspace{0.32\textwidth}
	\caption{
 Prediction from DMD in our QuACK framework compared to the intrinsic dynamics of energy in the smooth optimization regime for VQE of the $N$-qubit quantum Ising model with $n_{\mathrm{params}} = 2 N$.}
	\label{fig:smooth_total}
\end{figure}
\paragraph{Smooth Optimization Regimes.} In the smooth optimization regimes we consider, the optimization dynamics is not as linear as the cases of the quantum natural gradient and overparameterization. In addition, the dynamics from the Adam optimizer tends to be more complicated than the quantum natural gradient optimizer and the gradient descent. The above factors make challenges for the DMD prediction. In Figure~\ref{fig:smooth_total}, we present examples used in Sec.~\ref{sec:smooth}. The DMD prediction, with short training ($n_{\mathrm{sim}} = 3$) and long prediction ($n_{\mathrm{DMD}} = 60$), still captures the trend of decrease of the loss (energy) and has a good alignment in most cases. In some cases when the loss is almost already converged at a late optimization time with steps $\gtrsim 500$ , the rate of change of $\boldsymbol{\theta}$ is slow so that the DMD might overfit the flat dynamics and have divergence in its prediction. However, this is not a problem for the practical thanks to the robustness of the alternating controlled scheme in our QuACK, manifestly explained in the example of Figure~\ref{subfig:intro_adam}. Therefore, a great acceleration with >10x speedup can still be achieved. We also comment that intriguingly, in some examples of Figure~\ref{fig:smooth_total}, \textit{e.g.},~\ref{subfig:3-qb_smooth} , the loss from DMD in our QuACK might decrease faster than baseline VQE, which could lead to $a > 1$ and a higher speedup, as we have mentioned in Sec.~\ref{sec:complexity}.

\subsection{Experiments with Noise}
\label{sec:noise}

\begin{figure}[!h]
\centering
	\begin{subfigure}[ht]{0.32\textwidth}
		\centering
		\includegraphics[width=\textwidth]{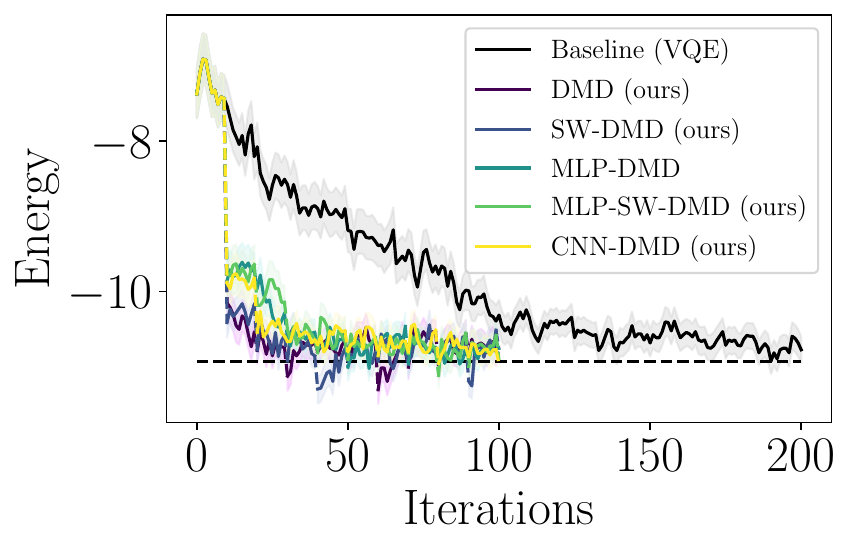}
		\caption{10-qubit shot noise, $n_{\mathrm{shots}} = 100$.}
		\label{subfig:10-qb_qasm_100shots}
    \end{subfigure}
    \hfill
	\begin{subfigure}[ht]{0.32\textwidth}
		\centering
		\includegraphics[width=\textwidth]{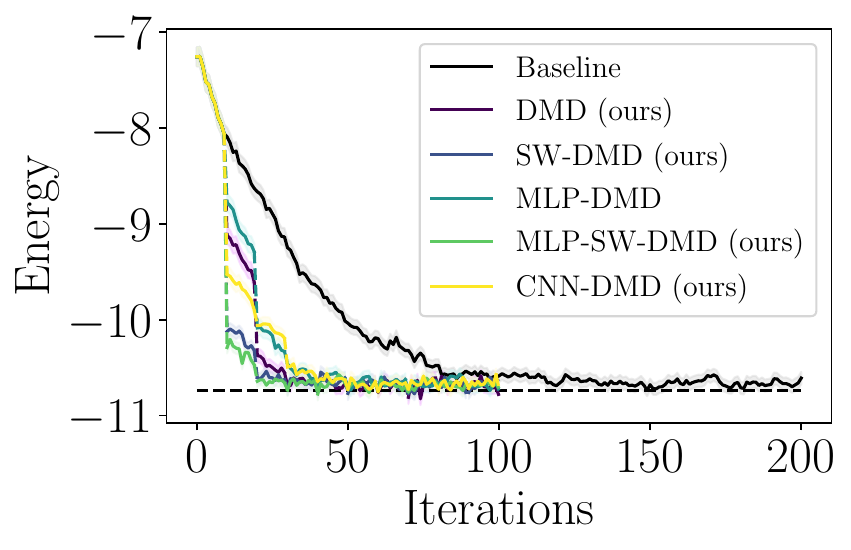}
		\caption{10-qubit shot noise, $n_{\mathrm{shots}} = 1{,}000$.}
		\label{subfig:10-qb_qasm_1000shots}
    \end{subfigure}
    \hfill
    \begin{subfigure}[ht]{0.32\textwidth}
		\centering
		\includegraphics[width=\textwidth]{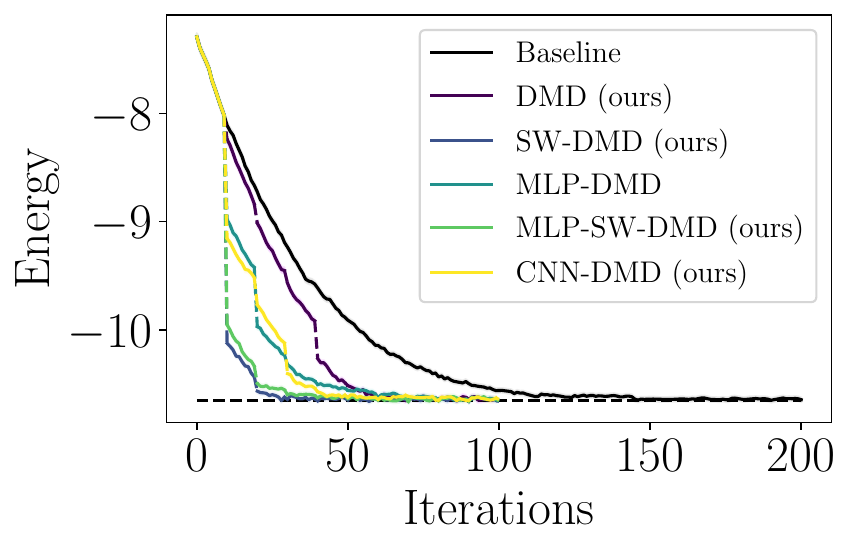}
		\caption{10-qubit shot noise, $n_{\mathrm{shots}} = 10{,}000$.}
		\label{subfig:10-qb_qasm_10000shots}
    \end{subfigure}
    \begin{subfigure}[ht]{0.32\textwidth}
		\centering
		\includegraphics[width=\textwidth]{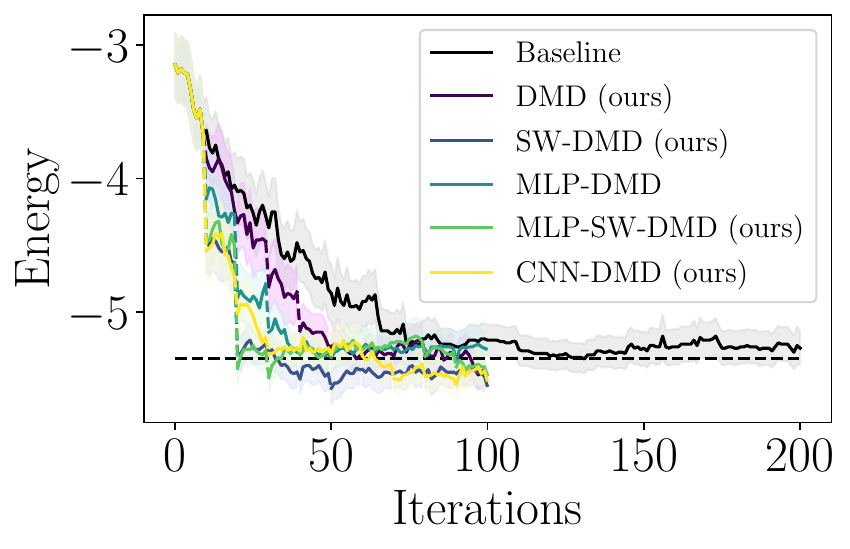}
		\caption{5-qubit shot noise, $n_{\mathrm{shots}} = 100$.}
		\label{subfig:5-qb_qasm_100shots}
    \end{subfigure}
    \hfill
	\begin{subfigure}[ht]{0.32\textwidth}
		\centering
		\includegraphics[width=\textwidth]{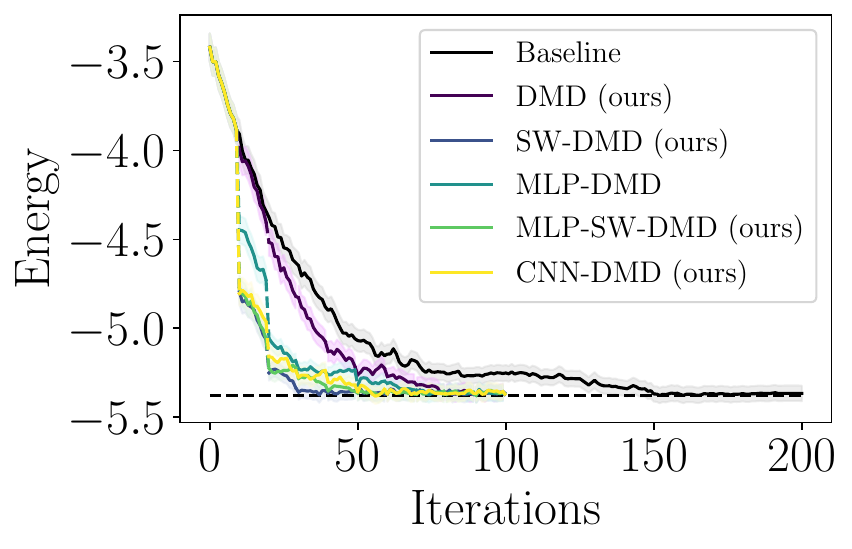}
		\caption{5-qubit shot noise, $n_{\mathrm{shots}} = 1{,}000$.}
		\label{subfig:5-qb_qasm_1000shots}
    \end{subfigure}
    \hfill
    \begin{subfigure}[ht]{0.32\textwidth}
		\centering
		\includegraphics[width=\textwidth]{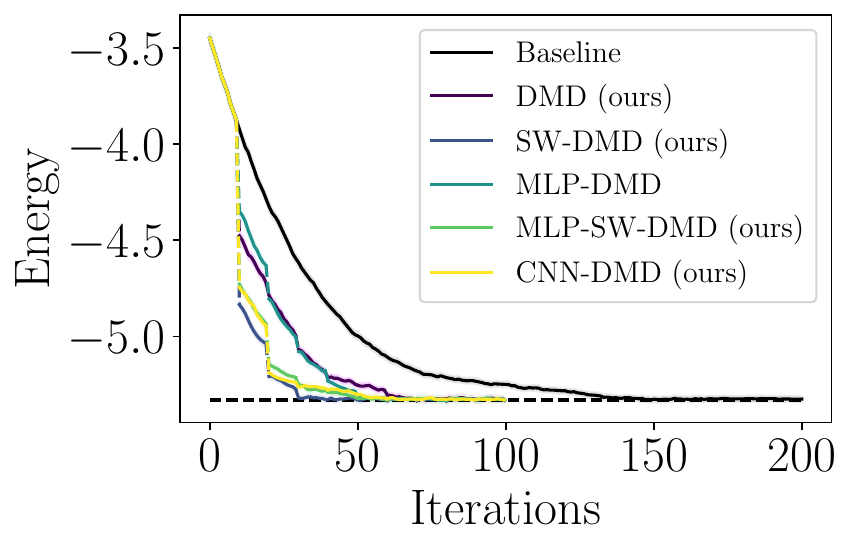}
		\caption{5-qubit shot noise, $n_{\mathrm{shots}} = 10{,}000$.}
		\label{subfig:5-qb_qasm_10000shots}
    \end{subfigure}
    \begin{subfigure}[ht]{0.32\textwidth}
		\centering
		\includegraphics[width=\textwidth]{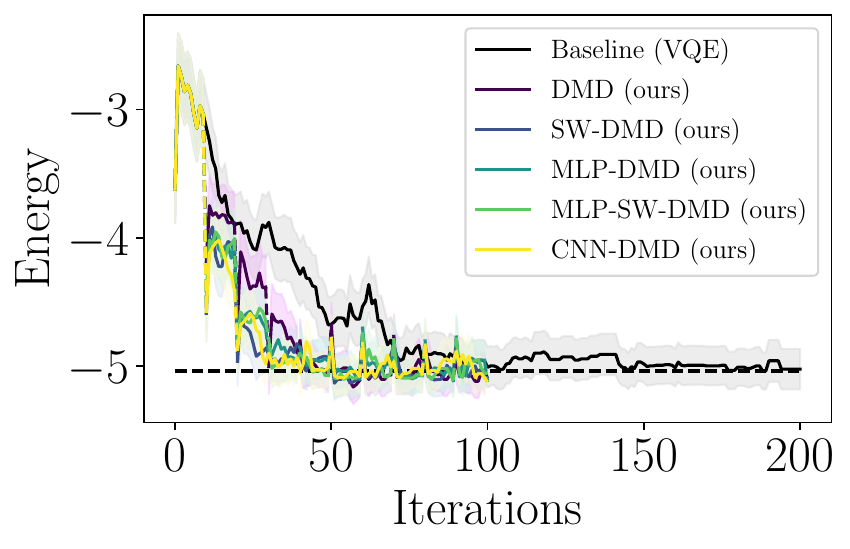}
		\caption{5-qubit FakeLima, $n_{\mathrm{shots}} = 100$.}
		\label{subfig:5-qb_fakelima_100shots}
    \end{subfigure}
    \hfill
	\begin{subfigure}[ht]{0.32\textwidth}
		\centering
		\includegraphics[width=\textwidth]{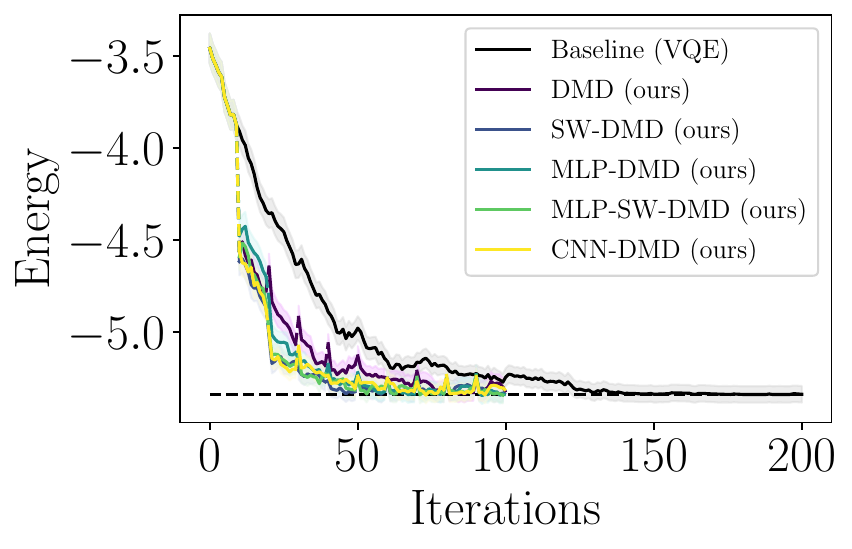}
		\caption{5-qubit FakeLima, $n_{\mathrm{shots}} = 1{,}000$.}
		\label{subfig:5-qb_fakelima_1000shots}
    \end{subfigure}
    \hfill
    \begin{subfigure}[ht]{0.32\textwidth}
		\centering
		\includegraphics[width=\textwidth]{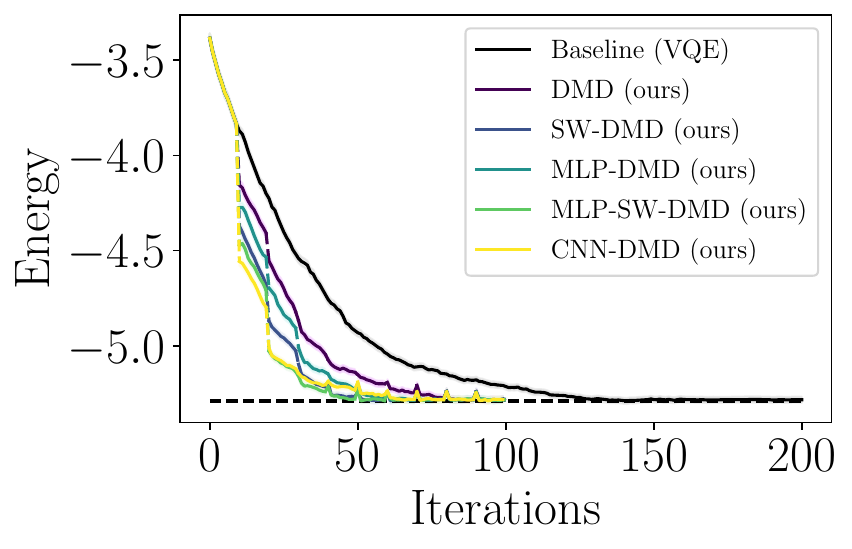}
		\caption{5-qubit FakeLima, $n_{\mathrm{shots}} = 10{,}000$.}
		\label{subfig:5-qb_fakelima_10000shots}
    \end{subfigure}
    \begin{subfigure}[ht]{0.32\textwidth}
		\centering
		\includegraphics[width=\textwidth]{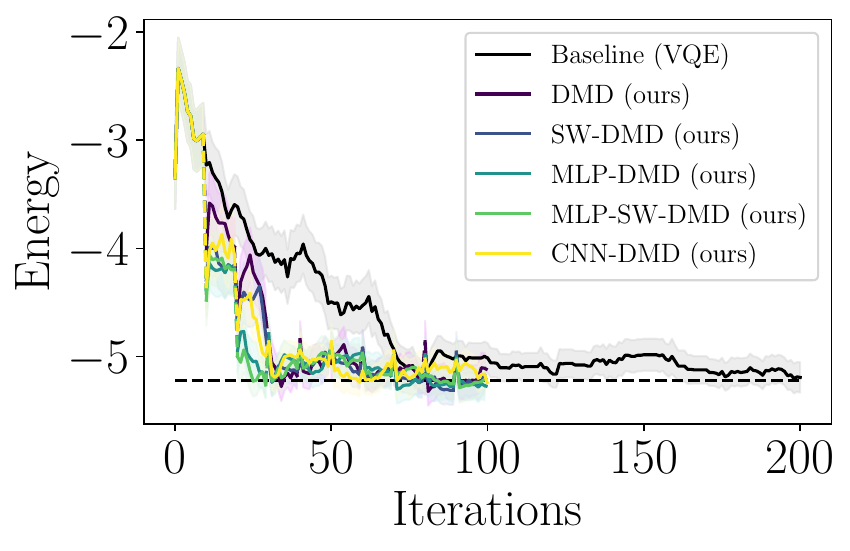}
		\caption{5-qubit FakeManila, $n_{\mathrm{shots}} = 100$.}
		\label{subfig:5-qb_fakemanila_100shots}
    \end{subfigure}
    \hfill
	\begin{subfigure}[ht]{0.32\textwidth}
		\centering
		\includegraphics[width=\textwidth]{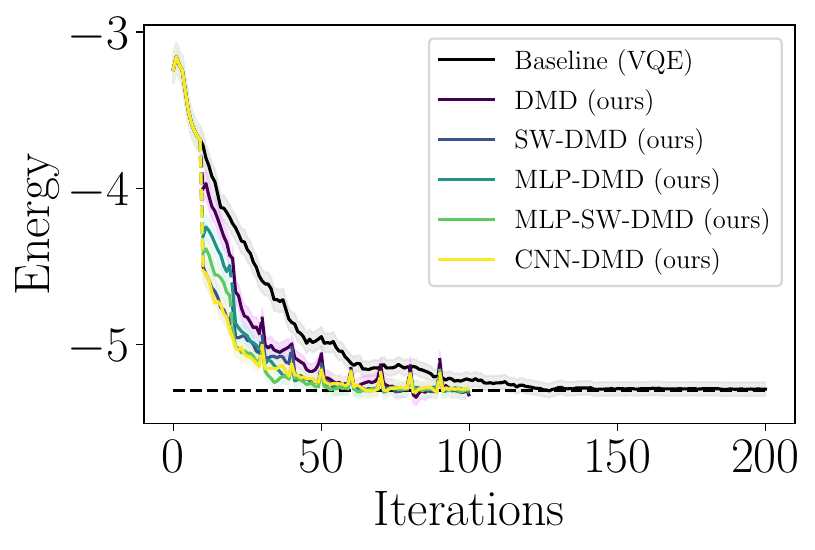}
		\caption{5-qubit FakeManila, $n_{\mathrm{shots}} = 1{,}000$.}
		\label{subfig:5-qb_fakemanila_1000shots}
    \end{subfigure}
    \hfill
    \begin{subfigure}[ht]{0.32\textwidth}
		\centering
		\includegraphics[width=\textwidth]{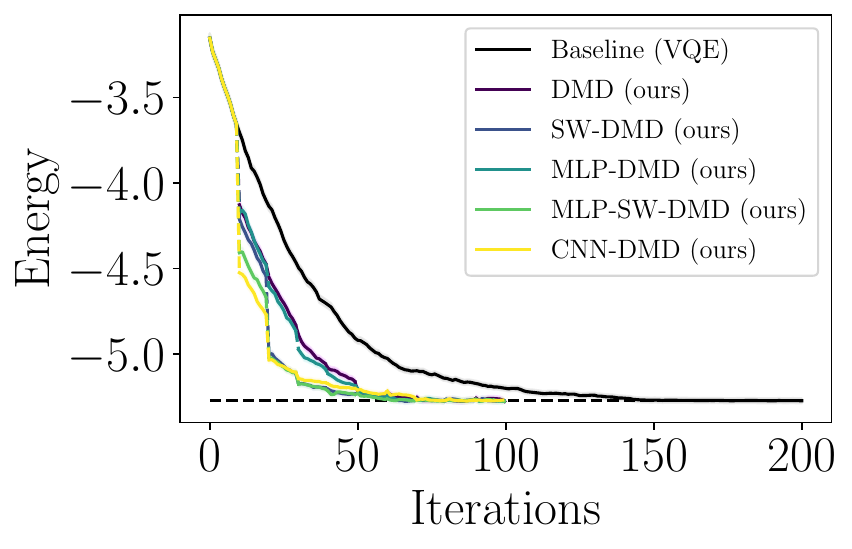}
		\caption{5-qubit FakeManila, $n_{\mathrm{shots}} = 10{,}000$.}
		\label{subfig:5-qb_fakemanila_10000shots}
    \end{subfigure}
	\caption{
 Experiments with different noise systems at $n_{\mathrm{shots}} = 100, 1000, 10000$.
 Optimization histories for energy of full VQE and various DMD methods with our QuACK are shown. Dashed lines indicate that the DMD methods are used to accelerate the optimization. The error bands are the statistical uncertainty of from finite-$n_{\mathrm{shots}}$ quantum measurements.}
	\label{fig:noise_total}
\end{figure}

In Sec.~\ref{sec:noise_main} and this section, we use the RealAmplitudes ansatz with $\mathrm{reps} = 1$ ($n_{\mathrm{params}} = 2 N$). The task is VQE of the quantum Ising model with $h = 0.5$ initialized at a random initialization (with a fixed random seed for comparing different cases with the same $N$). The optimizer is Adam with the learning rate 0.01. We have also applied measurement error mitigation~\citep{https://doi.org/10.48550/arxiv.2010.08520} to reduce the effect of quantum noise.

In Figure~\ref{fig:noise_total} with statistical error from shots plotted as error bands, we show the experiments with the acceleration effects by our QuACK framework using all the DMD methods with different noise systems (10-qubit and 5-qubit shots noise, FakeLima, and FakeManila) at $n_{\mathrm{shots}} = 100, 1000, 10000$. The full baseline VQE has $T_{b,t} = 200$ iterations. We choose $n_{\mathrm{sim}} = 10$ and $n_{\mathrm{DMD}} = 20$. SW-DMD, MLP-SW-DMD, and CNN-DMD use window size $n_{\mathrm{SW}} = 6$. The acceleration effects with numerical speedups are given in Table~\ref{tab:noise} of main text. There are more fluctuations in the small $n_{\mathrm{shots}}$ cases, but our QuACK is still able to accelerate the VQE.

The spikes in the DMD results occur every $n_{\mathrm{sim}} = 10$ iterations at the beginning of every piece of VQE, and are more distinct when $n_{\mathrm{shots}}$ is smaller. This may be because it is harder to make predictions of the VQE history when the parameter updates time series from measurement is more noisy. Despite the occasional spikes in energy, our QuACK with all the DMD methods still helps to get closer to the better parameters for optimization in later steps.

\begin{figure}[!htbp]
\centering
    	\centering
    	\includegraphics[width=0.5\linewidth]{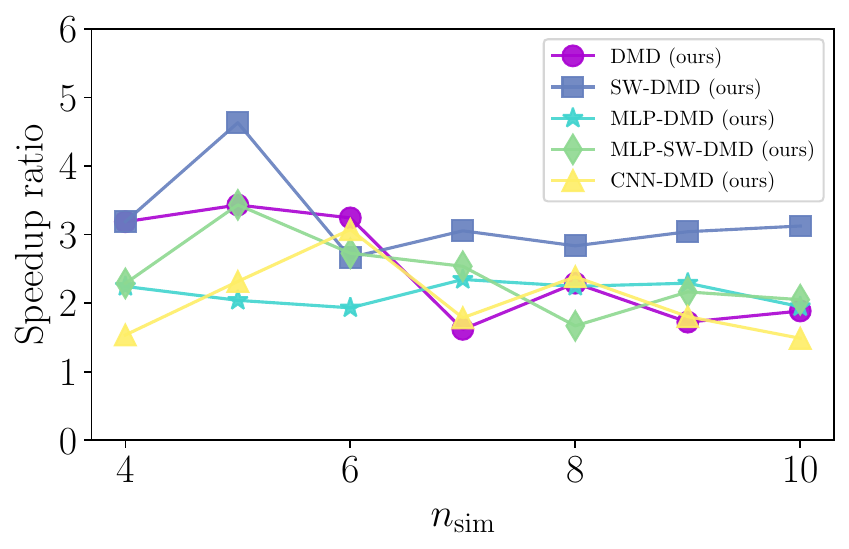}
    	\caption{Ablations of $n_{\mathrm{sim}}$ on the 12-qubit Ising model with Adam for all DMD methods.
    	}
\label{fig:ablation_n_sim}
\end{figure}
\begin{figure}[!htbp]
\centering
    	\centering
    	\includegraphics[width=0.5\linewidth]{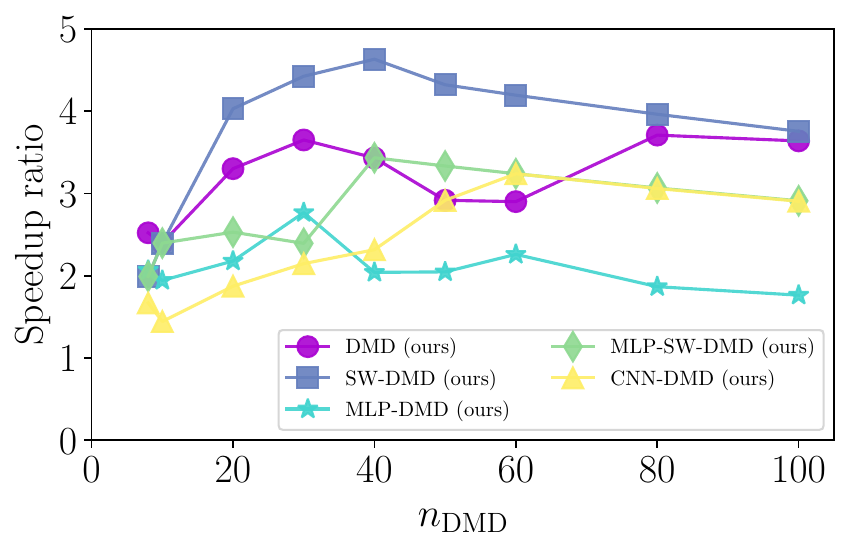}
    	\caption{Ablations of $n_{\mathrm{DMD}}$ on the 12-qubit Ising model with Adam for all DMD methods.
    	}
\label{fig:ablation_n_dmd}
\end{figure}
\subsection{Ablation Study for \texorpdfstring{$n_{\mathrm{sim}}$}{n sim} and \texorpdfstring{$n_{\mathrm{DMD}}$}{n DMD}}

In Figures~\ref{fig:ablation_n_sim} and~\ref{fig:ablation_n_dmd}, we show the ablation study of $n_{\mathrm{sim}}$ and $n_{\mathrm{DMD}}$ of all the DMD methods with our QuACK framework. The experimental setup is the same as the experiment in Sec.~\ref{sec:non-smooth}, 12-qubit Ising model with Adam (including having the same initial point), except for varying $n_{\mathrm{sim}}$. (For CNN-DMD, we increase $n_{\mathrm{iter}}$ from 12 to 20 to ensure convergence.) With $n_{\mathrm{sim}} \in [4, 10]$ in Figure~\ref{fig:ablation_n_sim} and $n_{\mathrm{DMD}} \in [8, 100]$ in Figure~\ref{fig:ablation_n_dmd}, all the DMD methods have acceleration effects (>1x speedup), which shows that our QuACK is robust in a range of $n_{\mathrm{sim}}$.
In Sec.~\ref{sec:non-smooth} for the quantum Ising model, we have used $n_{\mathrm{sim}} = 5$ and $n_{\mathrm{DMD}} = 40$.
The choice of $n_{\mathrm{sim}}$ and $n_{\mathrm{DMD}}$ will affect the speedup, and understanding the behavior from varying $n_{\mathrm{sim}}$ and $n_{\mathrm{DMD}}$ needs future study.

\subsection{Experiments on the Real IBM Quantum Computer Lima \label{app:real_lima}}

In the current era, the practical cost of using the real quantum resources is still very high (both in waiting time and actual expense), which makes implementing gradient-based optimizers on real quantum hardware an expensive task. However, we still try our best to perform an experiment on real IBM quantum computer Lima and demonstrate how our QuACK can accelerate optimization in complicated settings.  

Given our limited access to real quantum hardware, gradient-based optimization with quantum computer is too costly and time consuming. We use the gradient-free optimizer SPSA as an alternative, which updates the parameters by random perturbation. Before we dive in the data, We would like to point out that the gradient-free optimizer is very different from gradient-based optimizer and our main theoretical advantages in this work are for gradient-based quantum optimization. In addition, we find that the real quantum hardware is also subject to fluctuation of day-to-day calibration which could affect the performance of our approach (it can be removed if one has direct timely access to the quantum computer).  Hence, the results in this section should not be directly compared to the experiments in our main text. 

\begin{figure}[!htbp]
\centering
\begin{subfigure}{0.49\linewidth}
\includegraphics[width=1.0\textwidth]{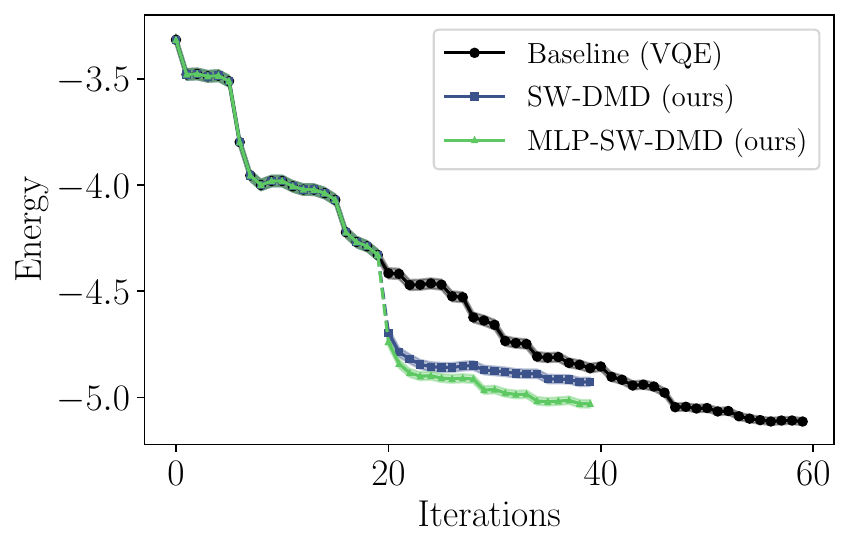}
\caption{FakeLima with SPSA.}
\label{subfig:fakelima}
\end{subfigure}
\hfill
\begin{subfigure}{0.49\linewidth}
\includegraphics[width=1.0\textwidth]{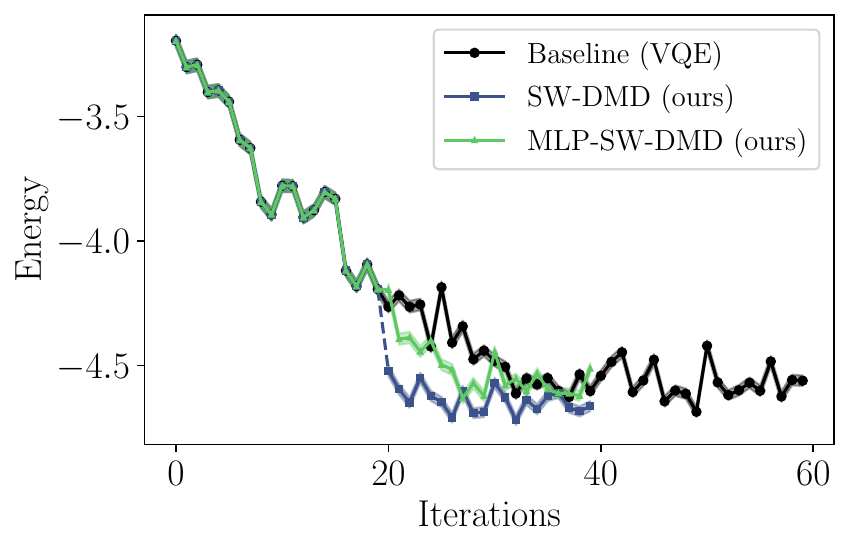}
\caption{Real Lima with SPSA.}
\label{subfig:lima_1}
\end{subfigure}
\caption{For the 5-qubit quantum Ising model at $h = 0.5$, SPSA on (\subref{subfig:fakelima}) FakeLima and (\subref{subfig:lima_1}) the real quantum computer IBM Lima with $n_{\mathrm{shots}} = 10{,}000$. Optimization histories for energy of full VQE, SW-DMD, and MLP-SW-DMD are shown. Dashed lines indicate that the DMD methods are used to accelerate the optimization. The error bands are the statistical uncertainty of from finite-$n_{\mathrm{shots}}$ quantum measurements.}
\end{figure}

We use $n_{\mathrm{shots}} = 10{,}000$ for the 5-qubit Ising model at $h = 0.5$ on both FakeLima and real Lima. For our QuACK, we use SW-DMD and MLP-SW-DMD with $n_{\mathrm{SW}} = 15$, $n_{\mathrm{sim}} = 20$, $n_{\mathrm{DMD}} = 20$ due to our limited access to the real quantum hardware.
We show the results in Figure~\ref{subfig:lima_1}.
Moreover, both the baseline VQE and our QuACK have more fluctuation on real Lima than FakeLima, which can be due to experimental instability in the real lab apparatus, such as day-to-day calibration. However, applying QuACK to this real physical case still help accelerate VQE, as SW-DMD and MLP-SW-DMD partly predict the dynamics and decrease the loss faster than the baseline.
This is a positive message as our approach can also be helpful for gradient-free optimization which our theorems are not designed for. Meanwhile, we would like to point out this is not the main focus of this paper because the gradient-based quantum optimization has been shown to have convergence guarantee and better than gradient-free methods~\citep{larocca2021theory,liu2023analytic,you2022convergence,leng2022differentiable}. In the coming few years when the quantum hardware become more suitable for gradient-based optimization, our QuACK will demonstrate more advantage for practical applications.   

\section{Additional Information}\label{app:additional_info}

\subsection{Ethics Statement}
This work introduces the Quantum-circuit Alternating Controlled Koopman learning (QuACK), a novel algorithm enhancing the efficiency of quantum optimization and machine learning. As a significant advancement in quantum computing, QuACK promises benefits across numerous scientific and technological applications. However, the potential misuse of such technology, particularly in areas like quantum chemistry, could lead to unintended consequences. For example, the increased efficiency in quantum chemistry simulations could inadvertently facilitate the development of chemical weapons if used irresponsibly.

\subsection{Compute}

All of our experiments for classically simulating quantum computation are performed on a single CPU. These includes all the experiments presented in the main text and the Appendix, except for the experiment run on the IBM quantum computer \textit{ibmq\_lima}, which is one of the IBM Quantum Falcon Processors. The views expressed are those of the authors, and do not reflect the official policy or position of IBM or the IBM Quantum team.

\subsection{Reproducibility}
We have stated all of our hyperparameter choices for the experimental settings in the main text and in the appendix. We perform simulations of VQE using Qiskit~\citep{Qiskit}, a python framework for quantum computation, and Yao~\citep{Luo2020Yao}, a framework for quantum algorithms in Julia~\citep{Julia-2017}. Our neural network code is based on Qiskit and Pytorch~\citep{https://doi.org/10.48550/arxiv.1912.01703}.
Our implementation of quantum machine learning is based on Julia Yao.
We use the quantum chemistry module from Pennylane~\citep{bergholm2018pennylane} to obtain the Hamiltonian of the molecule. Our code is available at \href{https://github.com/qkoopman/QuACK}{\texttt{https://github.com/qkoopman/QuACK}}.

\subsection{Broader Impact}
The development and implementation of the our QuACK framework could be impactful to the broader scientific and technological community. This approach to quantum optimization can drastically accelerate a multitude of computations in quantum chemistry, quantum condensed matter, and quantum machine learning, which can lead to groundbreaking advancements in these fields.

For instance, in quantum chemistry, the rapid optimization enabled by QuACK could expedite the design of new molecules and materials, potentially leading to the creation of novel drugs, eco-friendly fuels, and advanced materials with superior properties. In quantum condensed matter, faster computations could expedite our understanding of complex quantum systems, possibly catalyzing new insights in physics and material science.

In the realm of quantum machine learning, this acceleration could potentially revolutionize our ability to deal with large datasets and complex models, ultimately leading to more accurate and efficient machine learning algorithms. This, in turn, could benefit a broad range of sectors, from healthcare, where it could be used to better diagnose and treat diseases, to finance, where it could be used to model and predict market trends.

Moreover, the increased speed of quantum optimization could result in energy savings and reduced computing time, contributing to a more sustainable and efficient use of computational resources.

\end{document}